\documentclass[10pt,shortpaper,twoside,web] {ieeecolor}

\usepackage{generic}
\usepackage{cite}
\usepackage{amsmath,amssymb,amsfonts} 
\usepackage{algorithmic}
\usepackage{textcomp}
\usepackage{psfrag} 
\usepackage{cuted}
\usepackage{float}
\usepackage{graphicx} 
\usepackage{amsmath}  

\usepackage{graphicx}
\usepackage{wrapfig}
\usepackage{caption}

\usepackage[fancythm,fancybb,morse,ieee]{jphmacros2e} 
\usepackage{array}
\usepackage{bbm}
\usepackage{mathrsfs}
\usepackage{epstopdf}
\DeclareMathAlphabet{\mathpzc}{OT1}{pzc}{m}{it}
\usepackage{graphicx}
\usepackage{amsmath}
\usepackage[fancythm,fancybb,morse,ieee]{jphmacros2e}
\usepackage{caption}
\usepackage{subcaption}
\usepackage{multirow}
\usepackage{csquotes}
\usepackage{url}
\newtheorem{thm}{Theorem}
\newtheorem{lem}{Lemma}
\newtheorem{prop}{Proposition}

\newtheorem{defn}{Definition}

\def\BibTeX{{\rm B\kern-.05em{\sc i\kern-.025em b}\kern-.08em
		T\kern-.1667em\lower.7ex\hbox{E}\kern-.125emX}}
\begin{document}
	\title{Data-based Low-conservative Nonlinear Safe Control Learning}
	\author{Amir Modares, Bahare Kiumarsi, and Hamidreza Modares
		\thanks{ }
        }
	
\maketitle
\thispagestyle{empty} 
\begin{abstract} 
\textcolor{blue}{This paper develops a data-driven safe control framework for nonlinear discrete-time systems with parametric uncertainty and additive disturbances. The proposed approach constructs a data-consistent closed-loop representation that enables controller synthesis and safety certification directly from data.
Unlike existing methods that treat unmodeled nonlinearities as global worst-case uncertainties using Lipschitz bounds, the proposed approach embeds nonlinear terms directly into the invariance conditions via a geometry-aware difference-of-convex formulation. This enables facet- and direction-specific convexification, avoiding both nonlinearity cancellation and the excessive conservatism induced by uniform global bounds. We further propose a vertex-dependent controller construction that enforces convexity and contractivity conditions locally on the active facets associated with each vertex, thereby enlarging the class of certifiable invariant sets. For systems subject to additive disturbances, disturbance effects are embedded directly into the verification conditions through optimized, geometry-dependent bounds, rather than via uniform margin inflation, yielding less conservative robust safety guarantees. As a result, the proposed methods can certify substantially larger safe sets, naturally accommodate joint state and input constraints, and provide data-driven safety guarantees. The simulation results show a significant improvement in both nonlinearity tolerance and the size of the certified safe set.
}

\end{abstract}
\begin{IEEEkeywords}
Safe Control, Data-driven Control, Nonlinear Systems, Closed-loop Learning.
\end{IEEEkeywords}

\IEEEpeerreviewmaketitle

\section{Introduction}

\IEEEPARstart{C}ertification of the safety of autonomous control systems is essential for their deployment in real-world applications. 
Accordingly, such systems are designed to achieve performance objectives while guaranteeing safety, a combination that often leads to challenging and potentially intractable control problems. 
Several approaches have been proposed to address this challenge: 
1) safety filters \cite{SAfeRL4}-\cite{SAfeRL16} based on control barrier functions (CBFs) \cite{SB4}-\cite{SB8}, 
2) control merging approaches \cite{merge1} that combine safe and goal-reaching controllers, and 
3) model predictive control (MPC), which solves constrained optimal control problems in a receding-horizon fashion \cite{MPC1}. 

While CBF-based control design can be efficiently applied to nonlinear continuous-time (CT) systems, its extension to discrete-time (DT) systems generally leads to non-convex optimization problems \cite{safeDTconvex}. 
Control merging approaches require learning multiple controllers and coordinating their interaction, which becomes particularly challenging for nonlinear systems under uncertainty and disturbances. 
MPC provides a systematic framework for constraint handling but suffers from increased computational complexity for nonlinear systems and typically relies on conservative terminal invariant sets when linear terminal controllers are used.

\textcolor{blue}{
Despite significant progress in safety filters, control merging, and MPC-based methods, existing safe control approaches for nonlinear discrete-time systems typically maintain convexity by either canceling nonlinear terms or bounding
their effects using global worst-case envelopes. While these strategies lead to tractable formulations, they often ignore useful structure in the nonlinear dynamics and consequently yield overly conservative controllers and small certifiable safe sets.
This motivates the need for a different design paradigm that
preserves convexity while directly accounting for nonlinear dynamics in the safety conditions, rather than eliminating or overbounding them. 
}

\textcolor{blue}{Another line of research studies controlled invariant sets for nonlinear
discrete-time systems using reachability-based methods, including
sum-of-squares (SOS) relaxations \cite{CIS1,CIS2} and successive convexification
via local linearization \cite{CIS3}. These approaches often treat uncertainties
conservatively, rely on computationally intensive iterative procedures, and
introduce control only through existential quantification without synthesizing
an explicit feedback law. While invariance conditions for polyhedral sets in
certain nonlinear systems are derived in \cite{DC}, no controller synthesis procedure or nonconservative design method is provided.
}

\textcolor{blue}{System uncertainties further complicate nonlinear control design, motivating
growing interest in data-driven methods that directly address real-world uncertainty.
}
Data-driven safe control methods are commonly classified as indirect and direct. 
Indirect approaches first identify a system model from data and then design safety filters or controllers, including data-driven safety filters \cite{Data6}-\cite{Data9} and data-driven MPC methods \cite{MPCd1}-\cite{MPCd5}. 
However, these approaches prioritize predictive accuracy rather than control objectives, which can result in conservative or suboptimal performance. Direct approaches bypass explicit system identification and learn controllers directly from data \cite{Data1}-\cite{Data5}. 
Existing results for direct data-driven safe control of discrete-time systems are largely limited to linear dynamics, with only recent extensions to certain classes of nonlinear systems \cite{Data4,Data3a}. 
In these works, nonlinear effects are typically suppressed or treated as disturbances during controller learning. 
Such formulations can lead to conservative designs, particularly when safety constraints are imposed. 
Related nonlinear data-driven safe control methods based on control barrier functions or contractivity \cite{DataNon1,DataNon2} further restrict the class of nonlinearities and rely on computationally intensive optimization, limiting their scalability and practical applicability.

\textcolor{blue}{
This paper introduces a data-driven, direct safe control framework for nonlinear discrete-time systems with parametric uncertainty and additive disturbances. The main contribution is a synthesis pipeline that enforces a difference-of-convex (DC) structure on the closed-loop facet map in order to impose contractivity conditions while avoiding nonlinearity cancellation or the conservative absorption of nonlinear effects into disturbance bounds. Invariance is then guaranteed by explicitly accounting for how the nonlinear
terms bend the facet maps. For
systems with strongly nonlinear dynamics in which a single global DC
representation can still be conservative, we propose a hybrid approach that partially exploits exact convexification of facet maps and bounds only the
remaining components using coarse estimates. We show that this hybrid
strategy strictly outperforms existing nonlinear minimization approaches. To further improve feasibility and enlarge the class of certifiable invariant sets, we develop vertex-dependent, time-varying controller parameterizations that capture nonlinear effects that do not admit a uniform closed-loop
convexification over the entire safe set, but can be convexified locally, thereby significantly reducing the resulting conservatism. Input constraints can be incorporated explicitly using this approach. The framework is further extended to systems with additive disturbances by retaining disturbance-dependent terms directly in the safety conditions rather than collapsing them into worst-case bounds. Numerical studies demonstrate the effectiveness of the proposed approach, showing larger certified safe sets and improved constraint satisfaction.\\
}

 \vspace{-6pt}

\noindent \textbf{Notations and Definitions.} Throughout the paper, $\mathbb{R}^n$ and $\mathbb{R}^{n\times m}$ denote the sets
of real vectors and real matrices of dimensions $n$ and $n\times m$,
respectively, and $\mathbb{R}_{\ge 0}$ denotes the set of nonnegative real
numbers. For a matrix $P$, $P \ge 0$ ($P\le0$) indicates element-wise
nonnegativity (nonpositivity), while $P \succeq 0$ ($P \preceq 0$) denotes that $P$ is positive (negative)
semi-definite. For vectors $x,y\in\mathbb{R}^n$, $x\le y$ denotes
element-wise inequality. The norms $\|\cdot\|$, $\|\cdot\|_1$, and $\|\cdot\|_2$ denote the $\infty$, $1$,
and Euclidean norms, respectively. For a vector $L \in \mathbb{R}^n$, $L_i$ is its $i$-th entry. For a matrix $X\in\mathbb{R}^{n\times m}$, $x_{ij}$ denotes its $(i,j)$-th
entry, and $X_{i,:}$ and $X_{:,i}$ its $i$-th row and column, respectively. The symbols $I$ and $\mathbf 1$ denote, respectively, the identity matrix and the all-ones vector of appropriate dimensions. For a set $\cal S$,
 $\mathrm{int}\cal S$, 
$\partial\cal S$, and $\mathrm{vert}(\cal S)$  denote its interior, boundary, and the set of its vertices,
respectively. $\mathrm {diag}(\cdot)$ denotes a diagonal matrix formed from its arguments.
Finally, $\odot$ denotes elementwise product.

\vspace{6pt}
\begin{defn}
\textbf{Robust Invariant Set (RIS)} \cite{SetB} : The set $\cal{P}$ is a RIS for the system
$x(t+1)=f(x(t),w(t))$, with state $x(t)\in\mathbb{R}^n$ and disturbance
$w(t)\in\cal{W}$, if $x(0)\in\cal{P}$ implies that
$x(t)\in\cal{P}$ for all $t\ge 0$ and for all $w(t)\in\cal{W}$. \vspace{6pt}

To guarantee robust invariance of the safe set, we leverage the notion of
$\lambda$-contractive sets, introduced next. 

\end{defn} \vspace{6pt}
\begin{defn} \textbf{Contractive Sets} \cite[Chapter~4]{SetB}: Given $\lambda\in (0 \quad 1]$, the set $\cal{P}$ is {$\lambda$-contractive}
for $x(t+1)=f(x(t),w(t))$, with state $x(t)\in\mathbb{R}^n$ and
disturbance $w(t)\in\cal{W}$, if $x(t)\in\cal{P}$ implies that
$x(t+1)\in \lambda \cal{P}$ for all $t\ge 0$ and for all $w(t)\in\cal{W}$.  
\end{defn} \vspace{6pt}

\begin{defn} \textbf{Convex Functions} \cite[Chapter~3]{convex}:
A function $f:\mathbb{R}^n \rightarrow \mathbb{R}$
is convex on a convex set $\Omega$ if for all $x, y \in \Omega$, and all $\lambda \in [0 \quad 1]$, we have $f\big(\lambda x+(1-\lambda) y \big) \le \lambda f(x)+ (1-\lambda) f(y).
$ The second-order sufficient condition for convexity is given by 
 $\frac{\partial f^2(x)}{\partial x^2}   \succeq 0, \,\, \forall x \in \Omega.$

\end{defn} \vspace{6pt}
\begin{defn} \textbf{Difference of Convex (DC) Functions} \cite{DC}:
A function $H:\mathbb{R}^n\rightarrow\mathbb{R}$ defined on a convex set
$\Omega\subseteq\mathbb{R}^n$ is said to be a {difference-of-convex (DC)}
function if there exist functions $\beta,\phi:\mathbb{R}^n\rightarrow\mathbb{R}$
that are convex on $\Omega$ such that
$H(x)=\beta(x)-\phi(x)$ for all $x\in\Omega$.
Moreover, a function $H:\mathbb{R}^n\rightarrow\mathbb{R}^s$ is called a
DC function if each component $H_j(\cdot)$ is a DC function for all
$j=1,\ldots,s$.
\end{defn} \vspace{6pt}



\begin{defn} \textbf{ Polytopic Sets}
\cite[Chapter~3]{SetB}: A polyhedral set $\cal{P}(F,g)$ is defined as $\cal{S}(F,g)=\{x\in\mathbb{R}^n \mid Fx\le g\}$, where $F\in\mathbb{R}^{s\times n}$ and $g\in\mathbb{R}^s$.
Every polyhedral set is closed and convex; if bounded, it is referred to as a
{polytope} (or polytopic set).
A {face} of a polyhedral set is the intersection of the set with a
supporting hyperplane, vertices are faces of dimension~$0$, and a
{facet} is a maximal proper face.
 \textcolor{blue}{Let $\mathcal I_s=\{1,\dots,s\}$ denote the index set of facet constraints.
We say that two facets $i,j\in\mathcal I_s$ form a sign-symmetric pair if
$F_{j,:}=-F_{i,:}$.
Let $\mathcal I_{+}\subseteq\mathcal I_s$ contain one representative from each such sign-symmetric pair, and define $\mathcal I_{0}$ to collect any unpaired facets. Let $\mathrm{vert}(\cal S(F,g))=\{x_1,\dots,x_\ell\}$, and define their index set  as
$\mathcal I_\ell = \{1,\dots,\ell\}$. 
For any $x\in\cal S(F,g)$ define the active set
$\mathcal I_f(x) = \{i\in\mathcal I_s: F_{i,:}x=g_i\},$
and its minimal face
$\mathcal F(x) = \{z\in\cal S(F,g): F_{i,:}z=g_i,\ \forall i\in\mathcal I_f(x)\}.$ For any face $\mathcal F$ of $\cal S(F,g)$, let $\mathcal I_f(\mathcal F)$ denote its active facets
(i.e., $i\in\mathcal I_f(\mathcal F)$ if and only if $F_{i,:}z=g_i$ for all $z\in\mathcal F$).}
\end{defn} \vspace{3pt}

\section{Problem Formulation}
This section presents the problem formulation and a data-based representation of closed-loop nonlinear systems. \vspace{-7pt}
\subsection{System Setup}
Consider the discrete-time nonlinear system (DT-NS) as 
\begin{equation}\label{system} 
x(t+1) = {A}Z(x(t)) + Bu(t) + w(t),
\end{equation}
where $x(t) \in \mathbb{R}^n$ is the system's state, $u(t) \in \mathbb{R}^m$ is the control input, and $w \in \cal{W} \subset \mathbb{R}^n$ is the additive disturbance. Moreover,
\begin{align} \label{Snon}
    Z(x(t)) = \begin{bmatrix} x(t)^\top & S(x(t))^\top \end{bmatrix}^\top \in \mathbb{R}^{N+n},
\end{align}
where $S(x(t)) \in \mathbb{R}^{N}$ is a dictionary of nonlinear functions representing the system’s nonlinear dynamics. Using \eqref{Snon}, the system parameters can be decomposed into
\begin{align} \label{A1A2}
    A=[A_1 \quad A_2] \in \mathbb{R}^{n \times {(N+n)}},
\end{align}
where $A_1 \in \mathbb{R}^{n \times {n}}$ and $A_2 \in \mathbb{R}^{n \times {N}}$ are, respectively, the linear and nonlinear parts of the system model. \vspace{3pt}


\begin{assumption}
    The system matrices $A$ and $B$ are unknown, while $Z(x)$ is known. \textcolor{blue}{Moreover, $x=0$ is an equilibrium of the system.}
\end{assumption} \vspace{3pt}

\begin{assumption}
   The disturbance set $\cal{W}$ is bounded. Specifically, $w(t)\in\cal{W}$, where
\vspace{-6pt}
\begin{align}
\label{hw}
\cal{W}=\{\,w\in\mathbb{R}^n \mid \|w\|\le h_w\,\}.
\end{align}
\end{assumption} \vspace{3pt}
\begin{assumption}
  The system's safe set is given by a polytope  described by \vspace{-6pt}
  \begin{align} \label{safeset}
      \cal{S}(F,g)=\{ x \in {\mathbb{R}^n}: Fx  \le g\},
  \end{align}
  where $F \in \mathbb{R}^{s \times n}$ and $g \in \mathbb{R}^s$. 

\end{assumption} \vspace{3pt}


Since $A$ and $B$ are unknown, the controller is learned from data by applying an input sequence given by
\eqref{system}
\begin{align} \label{data-u}
U_0 = \begin{bmatrix} u(0) & u(1) & \cdots & u(T-1) \end{bmatrix} \in \mathbb{R}^{m \times T}.
\end{align}
Collect $T+1$ samples of the state vectors as 
$X = \begin{bmatrix} x(0) & x(1) & \cdots & x(T) \end{bmatrix} \in \mathbb{R}^{n \times (T+1)}$.
These collected samples are then organized as follows 
\begin{align} \label{data-x}
X_0 &= \begin{bmatrix} x(0) & x(1) & \cdots & x(T-1) \end{bmatrix} \in \mathbb{R}^{n \times T}, \\ \label{data-xx}
X_1 &= \begin{bmatrix} x(1) & x(2) & \cdots & x(T) \end{bmatrix} \in \mathbb{R}^{n \times T}.
\end{align}

Besides, the sequence of unmeasurable disturbances is 
\begin{equation}\label{data}
W_0 = \begin{bmatrix} w(0) & w(1) & \ldots & w(T-1) \end{bmatrix} \in \mathbb{R}^{n \times T},
\end{equation}


\begin{remark} \textcolor{blue}{
In \cite{Data4}, a nonlinear control policy of the form
\begin{align} \label{cont}
u(t)=KZ(x(t))=K_1x(t)+K_2 S(x(t))
\end{align}
is used to design stabilizing controllers.
With the dynamics \eqref{system}-\eqref{A1A2},
this choice yields the closed-loop system
$x(t+1)=(A_1+BK_1)x(t)+(A_2+BK_2)S(x(t))+w(t)$.
To preserve tractability, the approach in \cite{Data4} treats the nonlinear term $(A_2+BK_2)S(x_k)$ as a disturbance and seeks to minimize its effect, typically by enforcing small norms of $A_2+BK_2$.
}
\end{remark}
\vspace{6pt}

Since $S(x)$ is known (and only their parameters in $A_2$ are unknown), define 
\begin{align} \label{As}
    A_s=\frac{\partial S(x(t))}{\partial x}\big|_{x=0} \in \mathbb{R}^{N \times n},
\end{align}
We now propose a controller in the form of
\begin{align} \label{contNew}
    u(t)=K_1x(t)+K_2 Q(x(t)),  
\end{align}
where \vspace{-6pt}
\begin{align} \label{Q}
Q(x(t))=S(x(t))-A_s x(t),
\end{align}
is the nonlinearity remainder. Using \eqref{system} and \eqref{contNew}, the closed-loop dynamics  becomes  
\begin{align} \label{clNew}
    x(t+1)=(\bar A_1+BK_1) x(t)+ (A_2+BK_2) Q(x(t))+w(t),
\end{align} \vspace{-9pt}
where \vspace{-6pt}
\begin{align} \label{barA}
  \bar A_1=A_1+A_2A_s.
\end{align}
Note that the control law \eqref{contNew}, and hence the closed-loop system, consists of a linear term and a nonlinear remainder from linearization. Compared to \eqref{cont}, this interleaved linear--nonlinear structure is jointly optimized to enhance safety.
Define \vspace{-6pt}
\begin{align} \label{data-z}
    V_0 = \begin{bmatrix} X_0 \\ Q(X_0) \end{bmatrix} \in \mathbb{R}^{(n+N) \times T}.
\end{align}
where
\begin{align} \label{data-Q}
Q(X_0) &= \begin{bmatrix} Q(x(0)) & \cdots & Q(x(T-1)) \end{bmatrix} \in \mathbb{R}^{N \times T}.
\end{align}
The following assumptions are required. \vspace{6pt}

\begin{assumption}
   The pair $(\bar A_1,B)$ is stabilizable. 
\end{assumption} \vspace{3pt}

\begin{assumption}\label{assumption_5}
The data matrix $V_0$ has full row rank, and the number of samples satisfies $T \geq n+N+1$.
\end{assumption} \vspace{6pt}


\begin{assumption} \label{lipass}
     $Q(x)$ in \eqref{Q} is Lipschitz on $\cal S(F,g)$, i.e., $\|Q(x)-Q(x_0)\|_2\le L_Q\|x-x_0\|_2$ $\forall x,x_0\in\cal S(F,g)$ and some $L>0$. \textcolor{blue}{Given that the nonlinear function $Q(x) \in \mathbb{R}^{N}$ is known, {componentwise} Lipschitz bounds on $\cal S(F,g)$ can be obtained as $L_{Q,j}\ge 0$ such that, for all $x,y\in\cal S(F,g)$,
\begin{alignat}{2}
\label{eq:comp_lip_assm}
\big|Q_j(x)-Q_j(y)\big|
\le L_{Q,j}\,\|x-y\|_2,\qquad j=1,\ldots,N .
\end{alignat}
Then, $Q(x)$ is Lipschitz on $\cal S(F,g)$, and for any
$r\in\mathbb R^{N}$,
$r^\top(Q(x)-Q(y))\le \|L_Q\odot r\|_1\|x-y\|_2$, where $L_Q=[L_{Q,1},\ldots,L_{Q,N}]^\top$.}  
\end{assumption} \vspace{6pt}


\begin{remark}
 \textcolor{blue}{Assumption~1 presumes that the functional structure of the nonlinear dynamics is available, while the associated parameters are learned from data. This assumption is standard in adaptive and data-driven nonlinear control frameworks with safety or stability guarantees (see, for example \cite{Data66,Data3a,Data4,DataNon1,DataNon2}). Importantly, the proposed framework is agnostic to how this dictionary is obtained and can be readily combined with learning-based basis construction methods.} Assumption~2 imposes boundedness of the exogenous disturbance. Assumption~3 restricts the admissible safe set to be a polytope. 
 \textcolor{blue}{Assumption~4 ensures that the chosen controller parameterization is not structurally restrictive and admits at least one stabilizing feedback controller.
 }
  Assumption 5 requires the data to be rich for closed-loop learning. As stated later in Remark 3, this requirement is even weaker than the requirement for learning the system matrices $A$ and $B$ directly from data. Assumption~6 imposes a Lipschitz bound on the nonlinear terms, which is required when the nonlinearity minimization approach is employed.
 \vspace{6pt}
\end{remark}
\noindent \textbf{Problem 1: Data-based Safe control Design:} 
Consider the system \eqref{system} with collected input--state data given by
\eqref{data-u} and \eqref{data-x}--\eqref{data-xx}, and suppose that Assumptions~1--6 hold. \textcolor{blue}{Fix $\lambda\in (0 \quad 1]$.} The
objective is to learn a nonlinear safe control policy of the form \eqref{contNew}
such that the safe set \eqref{safeset} is an RIS.

\subsection{Data-based Representation of Closed-loop Systems}
Inspired by \cite{Data4}, we derive a data-based closed-loop representation under the controller \eqref{contNew}. 
\begin{lem}
 Consider the system \eqref{system} with collected input--state data given
by \eqref{data-u} and \eqref{data-x}--\eqref{data-xx}. Let the
controller be given by \eqref{contNew}. Then, under Assumption 5, the data-based closed-loop representation of the system is given by
\begin{align}\label{data-f}
x(t+1) & = (X_1-W_0)G_{K,1} x(t) \nonumber \\
& + (X_1-W_0)G_{K,2}Q(x(t)) + w(t),
\end{align}
where 
\begin{align} \label{datacond}
     V_0 [G_{K,1} \quad G_{K,2}]=I, \,\,
     K_1=U_0 G_{K,1}, \,\,\,  K_2=U_0 G_{K,2},
\end{align}
with $G_{K,1} \in \mathbb{R}^{T \times n}$
 and $G_{K,2} \in \mathbb{R}^{T \times N}$. Moreover, under Assumption 5, the solutions $G_{K,1}$  and $G_{K,2}$ exist and are not unique (and thus can be used as a decision variable).
\end{lem}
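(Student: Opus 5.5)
The plan is to first write the data matrices in terms of the unknown system matrices, and then substitute a particular factorization of the closed-loop matrices.

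\emph{Step 1 (data equation).} Stacking \eqref{system} over $t=0,\dots,T-1$ gives $X_1 = A Z(X_0) + B U_0 + W_0$, where $Z(X_0)=[X_0^\top\ S(X_0)^\top]^\top$. By \eqref{Q}, $S(X_0)=Q(X_0)+A_s X_0$. Hence
\[
A Z(X_0) = A_1 X_0 + A_2 S(X_0) = \bar A_1 X_0 + A_2 Q(X_0) = [\bar A_1\ \ A_2]\,V_0 ,
\]
with $\bar A_1$ as in \eqref{barA}. This gives the exact identity
\[
X_1 - W_0 = [\bar A_1\ \ A_2\ \ B]\begin{bmatrix} V_0 \\ U_0\end{bmatrix}.
\]

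\emph{Step 2 (closed-loop substitution).} By \eqref{clNew}, the closed-loop matrix is
\[
[\bar A_1+BK_1\ \ A_2+BK_2] = [\bar A_1\ \ A_2\ \ B]\begin{bmatrix} I_n & 0\\ 0 & I_N\\ K_1 & K_2\end{bmatrix}.
\]
Choose $G_{K,1},G_{K,2}$ such that
\[
\begin{bmatrix} V_0 \\ U_0\end{bmatrix}[G_{K,1}\ G_{K,2}] = \begin{bmatrix} I\\ [K_1\ K_2]\end{bmatrix},
\]
which is exactly \eqref{datacond}. Then
\[
[\bar A_1+BK_1\ \ A_2+BK_2] = [\bar A_1\ A_2\ B]\begin{bmatrix} V_0\\ U_0\end{bmatrix}[G_{K,1}\ G_{K,2}] = (X_1-W_0)[G_{K,1}\ G_{K,2}].
\]
Inserting this into \eqref{clNew} yields \eqref{data-f}. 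Read in the direction the lemma is used, any $G_K$ satisfying $V_0G_K=I$ defines $K_i=U_0G_{K,i}$, and the same computation applies.

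\emph{Step 3 (existence and non-uniqueness).} Under Assumption~\ref{assumption_5}, $V_0\in\mathbb{R}^{(n+N)\times T}$ has full row rank, so it has a right inverse, e.g.\ $V_0^\top(V_0V_0^\top)^{-1}$. Thus $V_0G=I$ is solvable. Since $T\ge n+N+1>n+N$, the kernel of $V_0$ is nontrivial, and every matrix $G_0 + \Pi$ with columns of $\Pi$ in $\ker V_0$ is also a solution. The solutions therefore form an affine family that can serve as a decision variable.

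The one point requiring care is the rewriting $AZ=\bar A_1 x + A_2 Q$ in Step 1. This is what makes $V_0$, rather than $Z(X_0)$, the relevant data matrix, so that the full-rank condition of Assumption~\ref{assumption_5} is the right one. Everything else is linear algebra.
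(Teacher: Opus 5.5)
Your proposal is correct and takes essentially the same route as the paper. You rewrite the stacked data as $X_1-W_0=[\bar A_1\ \ A_2]V_0+BU_0$, right-multiply by $G_K=[G_{K,1}\ \ G_{K,2}]$ with $V_0G_K=I$ and $K_i=U_0G_{K,i}$ to identify $(X_1-W_0)G_K$ with $[\bar A_1+BK_1\ \ A_2+BK_2]$, substitute into the closed loop, and obtain existence and non-uniqueness from the full row rank of $V_0$ together with $T>n+N$. Your remark that $K$ is defined from $G_K$, rather than $G_K$ being solved for a prescribed $K$ (which would require $[V_0^\top\ U_0^\top]^\top$ to have full row rank), is a useful clarification that the paper leaves implicit.
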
 \vspace{3pt}
\noindent \textit{Proof: }
Using the data \eqref{data-x}-\eqref{data-xx}, and the system \eqref{system}, one has
\begin{align}\label{system-data} 
& X_1  = \bar A_1 X_0+A_2 Q(X_0)+BU_0+W_0,
\end{align}
where $\bar A_1$ is defined in \eqref{barA}. Using  \eqref{data-z} and multiplying both sides of this equation by
$ G_K = [G_{K,1} \quad G_{K,2}]$, one has
\begin{align}\label{system-cl1} 
& X_1 G_K= [\bar A_1 \quad A_2] V_0 G_K+ BU_0 G_K+ W_0 G_K,
\end{align}
or equivalently,
\begin{align}\label{system-cl1} 
& (X_1-W_0) G_K=([\bar A_1 \quad A_2] V_0+BU_0) G_{K}.
\end{align}
Using \eqref{datacond} in this equation, the data-based representation of the closed-loop system becomes $(X_1-W_0) G_K=A+BK$, or equivalently,
\begin{align}\label{system-cl2} 
& (X_1-W_0) G_{K,1}=\bar A_1+BK_1, \nonumber \\ & (X_1-W_0) G_{K,2}=A_2+BK_2.
\end{align}
By Assumption 5, a right inverse $G_{K}$ exists such that $V_0 \, G_{K}=I$. {Besides, since the rank of $V_0$ is $n+N$ while at least $n+N+1$ samples are collected, its right inverse $G_K$  exists and is not unique.} Using \eqref{system-cl2} in \eqref{clNew} gives \eqref{data-f}.  \hfill   $\blacksquare$ \vspace{6pt}

\begin{remark}
\textcolor{blue}{Assumption~5 imposes a standard rank condition ensuring sufficient excitation of
the closed-loop data representation. The requirement $T \ge n+N+1$ is minimal. In contrast to \cite{Data4}, which
requires full-row rank of an augmented data matrix for exact model
identification, Assumption~5 is weaker and tailored to control-oriented synthesis rather than system identification.
}
\end{remark}

\section{Data-based Nonlinear Safe Control using Nonlinearity Minimization}


\textcolor{blue}{In this section, we adopt the nonlinearity-minimization strategy of \cite{Data4}
as a baseline for safe control design. While \cite{Data4} yields a region of
attraction, this set is obtained through
{a posteriori} analysis after the controller has been designed, without
accounting for the geometry or size of the invariant set during synthesis.
In many practical applications, however, the {template} of the safe set is prescribed {a priori} by state and input constraints or environmental
limitations, such as bounds on joint positions, velocities, or heading angles of mobile robots, as well as obstacle-avoidance regions. While we assume such a safe set template is given, we show in Section~IV how to systematically maximize the
size of the certified invariant set, when desired. 
Moreover, we show that nonlinearity-cancellation strategies can be overly
conservative and poorly suited to handling input constraints efficiently.
} \vspace{6pt}

The following theorem is now provided to design a data-driven safe controller using a nonlinearity-cancellation and Lipschitz-bound minimization strategy. \vspace{6pt}

\begin{thm}
\label{cor:disturbance_selfcontained}
Consider the system \eqref{system} under the control input
\eqref{contNew}. Let Assumptions~1--6 be satisfied.  Let there exist
$P_s\in\mathbb R^{s\times s}$, $\eta =[\eta_1,\ldots,\eta_s]^\top \in \mathbb{R}^s$, $G_{K,1} \in\mathbb R^{T\times n}$, and $G_{K,2} \in\mathbb R^{T\times N}$ that solve
\begin{subequations}
\begin{align} 
& \min_{G_{K,1},G_{K,2},\eta,P_s} \quad \|\eta\|_1 \label{eta}\\ 
& P_s g+\eta \le \lambda g , \label{cor:cont1}\\
& P_s F = F X_1 G_{K,1}, \label{cor:cont2}\\
& L_Q \, M_x \|F_{i,:}X_1G_{K,2}\|_2  + \nu_i \,
\big(\|G_{K,1}\|_2 + \|G_{K,2}\|_2\, L\big) 
 \nonumber \\ & +h_w \, \|F_{i,:}\|_1  \le  \eta_i, \,\, \forall i \in \mathcal I_s,  \label{th1bound}           \\ & V_0\begin{bmatrix}G_{K,1}&G_{K,2}\end{bmatrix}=I, \label{cor:cont3}  \\
& P_s \ge 0, \, \, \eta \ge 0, \label{cor:cont4-1}
\end{align}
\end{subequations}
where $L_Q$ is the Lipschitz bound defined in Assumption 6, $\nu_i=T h_w \, M_x \|F_{i,:}\|_1$ and $M_x=\max_{x\in\cal S(F,g)} \|x\|_2$. Then, the  set \eqref{safeset} is
$\lambda$-contractive with data-dependent control gain
$K_1=U_0G_{K,1}$ and $K_2=U_0G_{K,2}$.
\end{thm}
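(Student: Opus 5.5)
The goal is to show that for every $x\in\mathcal S(F,g)$ and every $w\in\mathcal W$, each facet satisfies $F_{i,:}x(t+1)\le \lambda g_i$. The argument has three steps: write the successor state through the data-based closed loop of Lemma~1, split $F_{i,:}x(t+1)$ into a nominal linear part plus three perturbation terms, and then control the linear part with the polyhedral (Farkas-type) certificate \eqref{cor:cont1}--\eqref{cor:cont2} and each perturbation term with one summand of \eqref{th1bound}.

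\textbf{Step 1: facet decomposition.} Constraint \eqref{cor:cont3} makes Lemma~1 applicable, so by \eqref{data-f},
\begin{align*}
F_{i,:}x(t+1) &= F_{i,:}X_1G_{K,1}x + F_{i,:}X_1G_{K,2}Q(x) \\
&\quad - F_{i,:}W_0\big(G_{K,1}x+G_{K,2}Q(x)\big) + F_{i,:}w .
\end{align*}

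\textbf{Step 2: the linear term.} By \eqref{cor:cont2}, $F X_1G_{K,1}x = P_sFx$. Since $P_s\ge 0$ and $Fx\le g$, this gives $P_sFx\le P_sg$, and \eqref{cor:cont1} then yields $P_s g\le \lambda g-\eta$. The linear term therefore contributes at most $\lambda g_i-\eta_i$ on facet $i$. It remains to show that the three perturbation terms together are bounded by $\eta_i$, which is exactly what \eqref{th1bound} supplies.

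\textbf{Step 3: the nonlinear term.} Assumption~1 says $x=0$ is an equilibrium. By the definition of $A_s$ in \eqref{As} and of $Q$ in \eqref{Q}, this gives $Q(0)=0$; if $S(0)\neq 0$ the paper presumably normalizes it so that this holds. Applying Cauchy--Schwarz and the Lipschitz bound of Assumption~\ref{lipass} with $y=0$,
\[
F_{i,:}X_1G_{K,2}Q(x)\le \|F_{i,:}X_1G_{K,2}\|_2\,L_Q\|x\|_2\le L_Q M_x\|F_{i,:}X_1G_{K,2}\|_2 .
\]

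\textbf{Step 4: the disturbance terms.} The additive disturbance is handled by Hölder's inequality, since $|F_{i,:}w|\le\|F_{i,:}\|_1\|w\|\le h_w\|F_{i,:}\|_1$.

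\textbf{Step 5: the unknown-data term.} The term $F_{i,:}W_0(\cdot)$ arises because only $X_1$, not $X_1-W_0$, is available. Each column of $W_0$ has $\infty$-norm at most $h_w$. Hence, for any vector $v\in\mathbb R^T$,
\[
|F_{i,:}W_0 v|\le \|F_{i,:}\|_1\, h_w\|v\|_1\le \|F_{i,:}\|_1 h_w T\|v\|_2 ,
\]
where the last step uses the crude inequality $\|v\|_1\le \sqrt T\|v\|_2\le T\|v\|_2$. Take $v=G_{K,1}x+G_{K,2}Q(x)$. Then
\[
\|v\|_2\le \|G_{K,1}\|_2\|x\|_2+\|G_{K,2}\|_2 L\|x\|_2\le M_x(\|G_{K,1}\|_2+\|G_{K,2}\|_2L),
\]
using Lipschitz continuity of $Q$ with $Q(0)=0$ again. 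This produces exactly the $\nu_i(\cdots)$ summand of \eqref{th1bound}.

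I expect Step 5 to be the main obstacle. The difficulty is matching the constants to the paper's $\nu_i=Th_wM_x\|F_{i,:}\|_1$, in particular which norm is placed on $W_0$ and on $v$. If the factor $T$ turns out to be intended, the route above (columnwise $\infty$-bounds on $W_0$, then $\ell_1\to\ell_2$ with slack) recovers it as a valid upper bound. The same step also quietly relies on $Q(0)=0$ and on identifying $L$ with $L_Q$ in the $\ell_2$ sense.

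\textbf{Conclusion.} Summing Steps 2--5 and using \eqref{th1bound} gives $F_{i,:}x(t+1)\le \lambda g_i-\eta_i+\eta_i=\lambda g_i$ for every facet $i\in\mathcal I_s$. Hence $x(t+1)\in\lambda\mathcal S(F,g)$ for all $w\in\mathcal W$, which is the claimed $\lambda$-contractivity (and, since $\lambda\le 1$, robust invariance). The gains are recovered as $K_1=U_0G_{K,1}$ and $K_2=U_0G_{K,2}$ from \eqref{datacond}.
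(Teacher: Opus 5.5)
Your proof is correct and follows essentially the same route as the paper. You use the Lemma~1 decomposition of $F_{i,:}x(t+1)$, the LP-duality (Farkas-type) certificate $P_sF=FX_1G_{K,1}$ with $P_s\ge 0$ for the linear part, and the three summands of \eqref{th1bound} to absorb the nonlinear, $W_0$-induced, and additive-disturbance terms. Your Steps~3--5 simply make explicit the bound that the paper asserts without derivation: the paper gets the factor $T$ from $\|W_0\|\le Th_w$ in the induced $\infty$-norm, whereas you get it from columnwise bounds and $\|v\|_1\le T\|v\|_2$.

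Your inference that $Q(0)=0$ follows from the equilibrium assumption is not valid, since that assumption only gives $A_2S(0)=0$. The paper, however, relies on the same implicit normalization and states $Q(0)=0$ outright in the proof of Proposition~1, so this is not a gap relative to the paper.
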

\noindent \textit{Proof:} See Appendix. \hfill $\blacksquare$ \vspace{3pt}

\begin{remark}
  \textcolor{blue}{In the absence of disturbances, the constraint \eqref{th1bound}  reduces to
  \begin{align}
  L \, M_x \|F_{i,:}X_1G_{K,2}\|_2  \le \eta_i.
  \end{align}
The rest of the theorem remains unchanged.}
\end{remark} \vspace{3pt}

\textcolor{blue}{The following example demonstrates that Theorem~1 may cancel beneficial nonlinear terms.} \vspace{3pt}

\textcolor{blue}{\noindent\textbf{Motivating Example~1 :}
Consider the polyhedral safe set $\cal S(F,g)$ with
$F=\begin{bmatrix}-1 & 1\end{bmatrix}^\top$ and $g=\begin{bmatrix}1 & 0\end{bmatrix}^\top$, i.e.,
$x(t)\in[-1 \quad 0]$, and the nonlinear scalar system
\begin{align} \label{fx}
x(t{+}1)=f(x(t))=1.2\,x(t)-0.2\,x^3(t)+u(t).
\end{align}
Let $u(t)=k_1x(t)+K_2x^3(t)$. Theorem~1 cancels the nonlinear term ($K_2=0.2$) and enforces
safety of the resulting linear system via $k_1$ (the uncanceled linear
dynamics are unstable). However, the nonlinear term is in fact a {useful} nonlinearity that
guarantees invariance without cancellation. To see this, set $u(t)\equiv 0$. Given $F_{1,:}=-1$ and $F_{2,:}=1$ and using \eqref{fx}, the associated facet maps are defined as
$H_1(x)=F_{1,:}x(t{+}1)=-f(x)$ and
$H_2(x)=F_{2,:}x(t{+}1)=f(x)=-H_1(x)$.
Using $\max_{x\in[-1 \,\,0]} H_2(x)=-\min_{x\in[-1 \,\, 0]} H_1(x)$, the
contractivity condition for $\lambda=1$ \big(i.e., 
$F_{i,:}x(t{+}1)\le g_i$, $i=1,2$ \big) becomes
$\max_{x\in[-1 \, \, 0]} H_i(x)\le g_i$, which reduces to a max--min check on the
single map $f$, namely
$\max_{x\in[-1 \,\, 0]} f(x)\le 0$ and $\min_{x\in[-1 \,, 0]} f(x)\ge -1$.
Moreover, $f$ is convex on $\cal S(F,g)$ since
$\frac{\partial^2 f(x)}{\partial x^2}=-1.2\,x\ge 0$ for all $x\in[-1 \quad 0]$.
Hence, both extrema of $f$ over $[-1 \quad 0]$ are attained at the endpoints, yielding
$f(-1)=-1$ and $f(0)=0$, i.e.,
$\max_{x\in[-1 \,\, 0]} f(x)=0$ and $\min_{x\in[-1 \, \, 0]} f(x)=-1$.
Consequently, $f(x)\in[-1 \quad 0]$ for all $x\in[-1 \quad 0]$, and $\cal S(F,g)$ is invariant for
the nonlinear system with zero control input. A linear feedback can then be used
to achieve any desired $\lambda\in(0 \quad 1)$. This example illustrates that, in order for Theorem~1 to guarantee safety, the
controller may be forced to cancel nonlinearities that are intrinsically
beneficial for safety, thereby incurring unnecessary control effort and conservatism.}

\textcolor{blue}{We now provide another example for which nonlinear cancellation is not entirely possible.} \vspace{6pt}

\textcolor{blue}{\noindent\textbf{Motivating Example 2:}
Consider the safe set $\cal S=\{x:0\le x_1 \le r_1, \, 0\le x_2 \le r_2 \} $ for the system
\begin{align}
\begin{bmatrix}x_1(t+1)\\ x_2(t+1)\end{bmatrix}
&=
A\begin{bmatrix}x_1(t)\\ x_2(t)\end{bmatrix}
+
{\begin{bmatrix}0\\ b\end{bmatrix}} u
+
\begin{bmatrix}
c_u\,x_2^{3}(t)\\ c_m\,x_1^{3}(t)
\end{bmatrix},
\label{eq:ex_onesided_sys}
\end{align}
where $A=[a_{ij}]$, and $b\neq 0$, $c_u > 0$ and
$u(t) = k_{11} x_1(t) + k_{12} x_2(t) + K_{21}\,x_1^{3}+ K_{22}\,x_2^{3}(t)$. Theorem~1 chooses $K_{21}=-\frac{c_m}{b}$ and $K_{22}=0$ to cancel the nonlinear term in the input channel.
For the given state constraints, $F$ has four rows and thus defines four
facet maps $H_i(x)=F_{i,:}x(t{+}1)$: two correspond to the upper facets
$x_1\le r_1$, $x_2\le r_2$, and two to the lower facets $-x_1\le 0$,
$-x_2\le 0$. Since each lower--facet normal is the negative of the corresponding
upper--facet normal, similar to Example 1, verifying each pair of upper-lower facets reduces to a max--min check of the same facet function. Define the facet maps associated with the two upper facets
$x_1\le r_1$ and $x_2\le r_2$ (i.e., $F_{1,:}=[\,1\;\;0\,]$,
$F_{2,:}=[\,0\;\;1\,]$) as
$H_i(x)=F_{i,:}x(t{+}1)=x_i(t{+}1)$. We have
\begin{alignat}{2}
& H_1(x) = x_1(t+1) = a_{11}x_1+a_{12}x_2 + c_u x_2^3, \label{eq:H1}\\
& H_2(x) = x_2(t+1) = \bar a_{21}x_1+\bar a_{22} x_2 +  \bar a_n x_1^3, \label{eq:H2}
\end{alignat} 
where $\bar a_{21}=a_{21}+bk_{11}$, $\bar a_{22}=a_{22}+bk_{12}$, and $\bar a_n=c_m+bK_{21}=0$.  Then, using the min-max logic, safety is certified by
$\max_{x\in\cal S} H_1(x)\le \lambda r_1$,
$\max_{x\in\cal S} H_2(x)\le \lambda r_2$ (upper facets), together with
$\min_{x\in\cal S} H_1(x)\ge 0$ and
$\min_{x\in\cal S} H_2(x)\ge 0$ (lower facets). We first consider the facet $H_1(x)$. Since $c_u>0$ and $x_2\ge 0$ on $\cal S$, we have
$\frac{\partial^2 H_1}{\partial x_2^2}(x)=6c_u x_2 \ge 0,\, \forall x\in\cal S$. Hence, $H_1$ is convex on $\cal S$. Therefore, its max and min values are both attained at the vertices. 
For the vertex set
$\mathcal I_{\ell}=\{(0,0),(r_1,0),(0,r_2),(r_1,r_2)\}$,  $\max_{x\in\cal S} H_1(x)\le \lambda r_1$ is satisfied if $\max\big\{0,\; a_{11}r_1,\; a_{12}r_2+c_u r_2^3,\; a_{11}r_1+a_{12}r_2+c_u r_2^3\big\}
\le \lambda r_1.$
To show the conservatism of Theorem~1, let us use the Lipschitz-based verification of Theorem~1. On $[0,r_2]$, the function $x_2^3$ is Lipschitz with constant $L=3r_2^2$. 
Using this bound, we obtain for all
$x\in\cal S$,
\(
H_1(x)
\;\le\;
a_{11}|x_1| + a_{12}|x_2| + 3|c_u| \, r_2^2 \, |x_2|.
\)
Since $|x_1|\le r_1$ and $|x_2|\le r_2$ over $\cal S$, a sufficient
condition for $\max_{x\in\mathcal S} H_1(x)\le \lambda r_1$ is
\begin{align}
a_{11} r_1 + \bigl(a_{12}+3|c_u|r_2^2\bigr) r_2
\;\le\;
\lambda r_1 .
\label{Exlip}
\end{align}
The Lipschitz approach
upper-bounds $H_1(x)$ by discarding sign information and geometric structure. For the exact condition, the worst case occurs when
$a_{11},\,a_{12},\,c_u \ge 0$, in which case $H_1$ is nondecreasing on
$\cal S$, and hence
$\max_{x\in\cal S} H_1(x)=H_1(r_1,r_2)$.
Even in this worst case, the contractivity condition reduces to
\begin{align}
\label{Ex2}
a_{11} r_1 + (a_{12} + c_u r_2^2)\, r_2 \le \lambda r_1 ,
\end{align}
which is strictly less conservative than \eqref{Exlip}.
This conservatism becomes even more pronounced when the sign condition
$a_{11},\,a_{12},\,c_u \ge 0$ is not satisfied, since the Lipschitz bound
further discards sign information and monotonicity by resorting to
absolute-value envelopes. This gap does not arise from
the nonlinearity itself, but from the use of a global Lipschitz envelope, which ignores the geometry of
the safe set.  \\
\indent Note also that exact cancellation of nonlinearities in the input channel is, in general, not possible in the
presence of input constraints. Nevertheless, if $c_m+bK_2\ge 0$, which can be enforced through controller design, then
$\frac{\partial^2 H_2}{\partial x_2^2}(x)=6(c_m+bK_2)x_1 \ge 0,\, \forall x\in\cal S$.
Hence, $H_2$ is convex on $\cal S$, and this convexity can be explicitly exploited in the verification step, rather
than resorting to a Lipschitz bound, thereby avoiding unnecessary conservatism.}
\vspace{3pt}

\begin{remark}
\textcolor{blue}{While Theorem~1 requires the nonlinear terms to be Lipschitz continuous, the above examples show that exploiting, or enforcing through design, the convexity of the closed-loop facet maps yields strictly weaker and less conservative conditions. In particular, convexity enables exact contractivity verification via vertex maximization, whereas Lipschitz-based methods inevitably introduce worst-case scaling through global envelopes.}
\end{remark}

\section{A Low-Complexity Data-based Nonlinear Safe Control Design}

\textcolor{blue}{In this section, we present a convexification–verification framework that
preserves convexity by exploiting a difference-of-convex (DC) representation,
avoiding nonlinearity cancellation and worst-case bounding.}

\subsection{Systems with no Disturbances}
\textcolor{blue}{In this subsection, we first formalize the disturbance-free case. The proposed
conditions explicitly regulate nonlinear effects in the closed-loop dynamics
without canceling usueful nonlinearities or resorting to worst-case bounds. We then
explain how these effects can be more effectively managed using a time-varying, vertex-based
control parameterization.}


The following lemma is required.

\begin{lem}
Let $H(x)=\beta(x)-\phi(x)$ be a DC function over $\cal S$ with
$H,\beta,\phi :\mathbb{R}^n \rightarrow \mathbb{R}^s$. For every $c\in\mathbb{R}^s$, define
$b^+=\{k\mid c_k\ge 0\}$ and $b^-=\{k\mid c_k<0\}$. For any $x_\ell\in\cal S$, define
\begin{alignat}{2}
\label{eq:R_def_fixed}
& R(x,x_{\ell})
=\sum_{k \in b^+} c_k \Big(
   \big[\beta(x)\big]_{k}
   -\big[\phi(x_{\ell})\big]_{k} \nonumber\\ &
   -\frac{\partial [\phi(x)]_{k}}{\partial x}\Big|_{x=x_{\ell}}(x-x_{\ell})
   \Big) 
 +\sum_{k \in b^-} c_k \Big(
   \big[\beta(x_{\ell})\big]_{k}
\nonumber\\
&   +\frac{\partial [\beta(x)]_{k}}{\partial x}\Big|_{x=x_{\ell}}(x-x_{\ell})
   -\big[\phi(x)\big]_{k}
   \Big).
\end{alignat}
Then, $c^\top H(x) \le R(x,x_\ell)$ for all $x\in\cal S$. Moreover,
$R(x,x_\ell)$ is convex over $\cal S$.
\end{lem}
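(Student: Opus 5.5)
The plan is to prove the bound componentwise and then sum, splitting by the sign of $c_k$. We write $c^\top H(x)=\sum_{k\in b^+}c_k[H(x)]_k+\sum_{k\in b^-}c_k[H(x)]_k$ and bound each term separately. The only tool needed is the first-order characterization of convexity: for a differentiable function $\psi$ convex on the convex set $\cal S$, and any $x,x_\ell\in\cal S$,
\begin{align*}
\psi(x)\ge \psi(x_\ell)+\frac{\partial \psi(x)}{\partial x}\Big|_{x=x_\ell}(x-x_\ell).
\end{align*}
This holds since $\cal S$ is a polytope and hence convex; it is the standard consequence of the definition of convexity given earlier, see \cite[Chapter~3]{convex}. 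We assume $\beta,\phi$ are differentiable at $x_\ell$, as implicitly required by \eqref{eq:R_def_fixed}. If $x_\ell$ lies on $\partial\cal S$, we would either assume differentiability on an open neighborhood of $\cal S$ or read the gradient as a subgradient. Either reading gives the same inequality.

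For $k\in b^+$ we apply the inequality to $\psi=[\phi]_k$. It yields $-[\phi(x)]_k\le -[\phi(x_\ell)]_k-\frac{\partial[\phi(x)]_k}{\partial x}|_{x=x_\ell}(x-x_\ell)$. Adding $[\beta(x)]_k$ bounds $[H(x)]_k$ above by the bracket in the first sum of \eqref{eq:R_def_fixed}, and multiplying by $c_k\ge0$ preserves the direction. For $k\in b^-$ we apply the inequality to $\psi=[\beta]_k$. It gives $[\beta(x)]_k\ge[\beta(x_\ell)]_k+\frac{\partial[\beta(x)]_k}{\partial x}|_{x=x_\ell}(x-x_\ell)$, so $[H(x)]_k$ is bounded \emph{below} by the bracket in the second sum. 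Multiplying by $c_k<0$ reverses this into an upper bound on $c_k[H(x)]_k$. Summing over all $k$ gives $c^\top H(x)\le R(x,x_\ell)$ for all $x\in\cal S$.

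For convexity of $R(\cdot,x_\ell)$ with $x_\ell$ fixed, we inspect each summand. For $k\in b^+$ the summand is $c_k[\beta(x)]_k$ plus an affine function of $x$. It is convex because $[\beta]_k$ is convex and $c_k\ge0$. For $k\in b^-$ the summand is affine plus $-c_k[\phi(x)]_k=|c_k|[\phi(x)]_k$, which is convex because $[\phi]_k$ is convex. A finite sum of convex functions is convex on $\cal S$, which completes the argument.

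The main obstacle is essentially bookkeeping. The sign handling for $b^-$ is the step most prone to error: there one must linearize $\beta$ rather than $\phi$, so that the inequality flips correctly under the negative multiplier. The convexity claim also relies on the concave part being linearized while the convex part is kept intact in each case. The only genuinely delicate point is the differentiability of $\beta,\phi$ at $x_\ell$, which we handle as described above.
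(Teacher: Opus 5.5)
Your proposal is correct and takes essentially the same route as the paper. The paper gives no self-contained proof; it defers to the standard DC majorization argument of the cited reference (Definition 4, Properties 2--3): linearize $\phi$ at $x_\ell$ for $c_k\ge 0$ and $\beta$ at $x_\ell$ for $c_k<0$ via the first-order convexity inequality, then sum, with convexity of $R(\cdot,x_\ell)$ following because only affine terms and nonnegative multiples of $[\beta]_k$ or $[\phi]_k$ remain.
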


\noindent \textit{Proof:} This result is a vertex-centered generalization of the DC majorization framework introduced in \cite{DC}, specifically Definition 4 and Properties 2–3, and the proof is similar.
 \hfill   $\blacksquare$

\vspace{6pt}

For the closed-loop nonlinear system \eqref{data-f} with no disturbances (i.e., $W_0=0$ and $w(t)\equiv 0$) and the safe set \eqref{safeset} and define the facet map as
\begin{align} \label{Hf}
& H_i(x)
=F_{i,:} \, x(t{+}1)= 
F_{i,:} X_1G_{K,1}x+F_{i,:} X_1G_{K,2}Q(x) \nonumber \\ & \qquad \, \, \, =c_i^\top x+h_i(x).
\end{align}
where
\begin{align} \label{ci}
  c_i^\top=F_{i,:}X_1G_{K,1}, \quad h_i(x)=F_{i,:} X_1G_{K,2}Q(x).
\end{align}

\vspace{3pt}

\textcolor{blue}{We now present our main theorem for safe control design, which relies on a geometry-aware convexification–verification approach.}

\begin{thm}
\label{thm:hybrid_lagrangian_dc_matrix}
Consider the system \eqref{system} with $w(t)\equiv 0$ under
Assumptions~1 and~3--5.
Define $q_\ell=Q(x_{\ell})$ for ${\ell}\in\mathcal I_{\ell}$, where $Q(.)$ is defined in \eqref{Q}.
Let there exist decision variables
\(
t\in\mathbb R^s,\;
\varepsilon\in\mathbb R^s,\;
\{P_\ell\}_{\ell\in\mathcal I_\ell},\;
G_{K,1},G_{K,2}
\)
that solve
\begin{subequations}
\begin{align}
& \min_{G_{K,1},G_{K,2},t,\varepsilon,\{P_\ell\}}
\quad \|\varepsilon\|_1
\label{eq:obj_cost} \\ &
P_\ell g + t+\frac12\,\varepsilon\odot\|x_\ell\|_2^2 \le \lambda g,
\quad \forall \ell\in\mathcal I_\ell,
\label{eq:margin_merged_opt}
\\[0.1cm]
& F X_1G_{K,2}q_\ell 
+\frac12 \varepsilon\odot \|x_{\ell}\|_2^2
\;\le\; t,
\quad \forall \ell \in \mathcal I_{\ell},
\label{eq:epi_vertices_matrix}\\
& P_{\ell} F = F X_1G_{K,1} - \diag(\varepsilon)\mathbf 1\, x_\ell^\top,
\,
P_{\ell} \ge 0,
\, \forall \ell\in\mathcal I_\ell,
\label{eq:pi_matching}\\
&  \frac{\partial^2 H_i(x)}{\partial x^2}
\succeq -\varepsilon_i I,
\quad \forall x\in\cal S(F,g),\ \forall i\in\mathcal I_s,
\label{eq:eps_hess_matrix}\\
& V_0\begin{bmatrix}G_{K,1}&G_{K,2}\end{bmatrix}=I,
\label{cor:cont4}\\
& \varepsilon\ge 0,\quad t\ge 0.
\label{eq:admissible_matrix}
\end{align}
\end{subequations}
Then, the resulting gains
\(K_1=U_0G_{K,1}\) and \(K_2=U_0G_{K,2}\)
render the closed-loop system $\lambda$-contractive on $\cal S(F,g)$.
\end{thm}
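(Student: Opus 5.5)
The plan is to certify, for each facet $i\in\mathcal I_s$, that $\max_{x\in\mathcal S(F,g)} H_i(x)\le\lambda g_i$, where $H_i(x)=F_{i,:}x(t{+}1)=c_i^\top x+h_i(x)$ is the facet map \eqref{Hf}--\eqref{ci}. This is exactly $\lambda$-contractivity of $\mathcal S(F,g)$. First, by Lemma~1 with $W_0=0$ and $w\equiv0$, the constraint \eqref{cor:cont4} makes the closed loop under $K_1=U_0G_{K,1}$, $K_2=U_0G_{K,2}$ equal to $x(t{+}1)=X_1G_{K,1}x+X_1G_{K,2}Q(x)$. Hence $H_i$ is the true next-step facet value. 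It is then enough to bound $H_i(x)$ uniformly over $x\in\mathcal S(F,g)$.

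The key step is a DC split with the quadratic $\phi_i(x)=\tfrac{\varepsilon_i}{2}\|x\|_2^2$, which is Lemma~2 specialised to a vertex-centered quadratic majorant. Since $c_i^\top x$ is linear, $\partial^2 H_i/\partial x^2=\partial^2 h_i/\partial x^2$. Constraint \eqref{eq:eps_hess_matrix} therefore gives that $h_i(x)+\tfrac{\varepsilon_i}{2}\|x\|_2^2$ is convex on the polytope $\mathcal S(F,g)$. A convex function on a polytope attains its maximum at a vertex, so
\[
h_i(x)+\tfrac{\varepsilon_i}{2}\|x\|_2^2\le \max_{j\in\mathcal I_\ell}\Big(F_{i,:}X_1G_{K,2}q_j+\tfrac{\varepsilon_i}{2}\|x_j\|_2^2\Big)\le t_i,
\]
where the last inequality is \eqref{eq:epi_vertices_matrix}.

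Next, fix an arbitrary vertex index $\ell$ and write the identity
\[
H_i(x)=(c_i-\varepsilon_i x_\ell)^\top x+\Big(h_i(x)+\tfrac{\varepsilon_i}{2}\|x\|_2^2\Big)+\Big(\varepsilon_i x_\ell^\top x-\tfrac{\varepsilon_i}{2}\|x\|_2^2\Big).
\]
Each of the three terms is bounded separately:
\begin{itemize}
\item The first term is linear. By the $i$-th row of \eqref{eq:pi_matching}, $c_i^\top-\varepsilon_i x_\ell^\top=(P_\ell)_{i,:}F$. Together with $P_\ell\ge0$ and $Fx\le g$, this Farkas-type argument gives $(c_i-\varepsilon_ix_\ell)^\top x\le (P_\ell)_{i,:}g$ for all $x\in\mathcal S(F,g)$.
\item The second term is at most $t_i$ by the previous step.
\item The third term equals $\tfrac{\varepsilon_i}{2}\|x_\ell\|_2^2-\tfrac{\varepsilon_i}{2}\|x-x_\ell\|_2^2\le\tfrac{\varepsilon_i}{2}\|x_\ell\|_2^2$, since $\varepsilon_i\ge0$.
\end{itemize}
Summing the three bounds gives $H_i(x)\le (P_\ell g)_i+t_i+\tfrac{\varepsilon_i}{2}\|x_\ell\|_2^2\le\lambda g_i$ by \eqref{eq:margin_merged_opt}. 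This holds for every $x\in\mathcal S(F,g)$ and every $i$, so $Fx(t{+}1)\le\lambda g$, i.e.\ $x(t{+}1)\in\lambda\mathcal S(F,g)$.

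The main obstacle is getting the bookkeeping of the $\varepsilon$-terms to close. A naive bound that evaluates $H_i+\tfrac{\varepsilon_i}{2}\|x\|^2$ at vertices and then uses \eqref{eq:pi_matching} at the same vertex leaves an extra $\tfrac{\varepsilon_i}{2}\|x_\ell\|^2$ that \eqref{eq:margin_merged_opt} does not absorb. The fix is the decomposition above: convexify only the nonlinear part $h_i$, whose vertex maxima are handled by $t$, and absorb the gradient of the concave quadratic at the center $x_\ell$ into the linear part via $P_\ell$. The leftover $-\tfrac{\varepsilon_i}{2}\|x-x_\ell\|^2$ is nonpositive. A single center $\ell$ suffices, although all the constraints for every $\ell$ are available.
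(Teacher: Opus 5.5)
Your proof is correct and follows the paper's argument. Your three-term identity is exactly the paper's Lemma~2 majorization $h_i(x)\le \tilde h_i(x)-\varepsilon_i x_\ell^\top x+\tfrac{\varepsilon_i}{2}\|x_\ell\|_2^2$ with $\tilde h_i=h_i+\tfrac{\varepsilon_i}{2}\|x\|_2^2$, which you verify by completing the square instead of citing the lemma; after that you use the same vertex-epigraph bound $\tilde h_i\le t_i$ and the same LP weak-duality bound $(c_i-\varepsilon_i x_\ell)^\top x\le (P_\ell)_{i,:}g$ before invoking \eqref{eq:margin_merged_opt}.
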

 \noindent \textit{Proof:} See Appendix. \hfill $\blacksquare$
\vspace{4pt}

\begin{remark}
\textcolor{blue}{For Motivating Example~1, $H_2(x)$ is already convex since
$\frac{\partial^2 H_2(x)}{\partial x^2}=-1.2\,x\ge 0$ for all $x\in[-1 \quad 0]$.
Hence, the relaxed curvature condition in Theorem~2 holds with $\varepsilon=0$, and
$\max_{x\in\cal S} H_2(x)=\max\{H_2(-1),H_2(0)\}=0$ and
$\min_{x\in\cal S} H_2(x)=\min\{H_2(-1),H_2(0)\}=-1$.
Therefore, using the same reasoning as in Example~1,
$f(x)\in[-1 \quad 0]$ for all $x\in[-1 \quad 0]$, i.e., $\cal S$ is invariant
with zero control input $u(t)\equiv 0$, and the verification is exact. For Motivating Example~2, the closed-loop facet maps $H_i(x), i=1,2$ are convex on the safe
set $\cal S$, so the relaxed curvature condition in Theorem~2 again holds with
$\varepsilon=0$, and the verification reduces to exact vertex maximization,
$\max_{x\in\cal S}H_i(x)=\max_{\ell \in \mathcal I_\ell}H_i(x_\ell)$ for all
$i=1,\ldots,4$. The linear control gain does not affect the convexity of the facet maps. Therefore, in contrast to Theorem~1, Theorem~2 introduces no conservatism in either example: the contractivity conditions coincide with the true maxima of the facet maps over $\cal S$.}
\end{remark} \vspace{3pt}

\textcolor{blue}{Motivating Examples~1 and~2 illustrate that for one-sided safe sets, the signed facet
maps exhibit consistent curvature, allowing convexity to be enforced with
$\varepsilon=0$. In general, however, curvature signs vary over $\cal S(F,g)$, so
the conditions of Theorem~2 cannot be satisfied globally with $\varepsilon=0$.
For instance, in Motivating Example~1, the second derivative
$f''(x)=-1.2\,x$ changes sign on $[-r \quad r]$, so $f$ is not globally convex over the
symmetric set and curvature slack becomes unavoidable. When $\varepsilon >0$
in Theorem~2, conservativeness arises from two distinct sources.
The first is the {DC convexification} of the nonlinear facet map, which
enforces a global weak--convexity condition
\(
 \frac{\partial^2 H_i(x)}{\partial x^2}
\succeq -\varepsilon_i I,\, x\in\cal S(F,g),
\)
so that $\tilde H_i(x)=H_i(x)+\tfrac{\varepsilon_i}{2}\|x\|_2^2$ becomes convex
and can be bounded using vertex epigraph constraints. The role of $\varepsilon$ is to enforce convexity of the closed-loop facet maps in cases where convexity cannot be achieved through control design alone. The associated tightening satisfies
\[
\Delta_i^{\mathrm{conv}}
\;\le\;
\frac{\varepsilon_i}{2}
\max_{\ell\in\mathcal I_\ell}\|x_\ell\|_2^2,
\]
and thus scales linearly with the curvature slack $\varepsilon_i$ minimized in the synthesis. 
Even a tighter bound can be obtained by replacing $\varepsilon_i I$ with a directional slack
$\Sigma_i \succeq 0$ in the relaxed second-order condition
$\frac{\partial^2 H_i(x)}{\partial x^2} \succeq -\Sigma_i$, which yields
\(
\Delta_i^{\mathrm{conv}}
\le
\tfrac{1}{2}\max_{\ell\in\mathcal I_\ell} x_\ell^\top \Sigma_i x_\ell .
\)
Optimizing $\Sigma_i$ penalizes curvature only along active directions, strictly reducing tightening whenever curvature is anisotropic.
A second source of conservatism comes from the inequality
\(
\max_{x\in\cal S(F,g)}
\big(\big(c_i^\top-\varepsilon_i x_\ell^\top\big)x + h_i(x)\big)
\;\le\;
\max_{x\in\cal S(F,g)} \big(c_i^\top-\varepsilon_i x_\ell^\top\big)x
\;+\;
\max_{x\in\cal S(F,g)} h_i(x),
\)
where the linear part $\big(c_i^\top-\varepsilon_i x_\ell^\top\big)x$ is maximized exactly via a linear program, while the
nonlinear term $h_i(x)$ is bounded separately.
This decomposition can introduce conservativeness only when the maximizers of $\big(c_i^\top-\varepsilon_i x_\ell^\top\big)x$ and $h_i(x)$
occur at different vertices of $\cal S(F,g)$; otherwise, the bound is tight. We present Proposition 1 and Theorem 3 to mitigate these sources of conservatism. }


\begin{remark}
Define the facet--wise curvature operator
\begin{align}
\mathcal{L}_i(x)
=
-\sum_{k=1}^{N}
\big(F_{i,:} X_1 {(G_{K,2})}_{:,k}\big)
\frac{\partial^2 Q_k(x)}{\partial x^2}
\;\in\; \mathbb{R}^{n \times n}.
\label{eq:Hi_def}
\end{align}

The relaxed second--order condition \eqref{eq:eps_hess_matrix} in Theorem~2 can then be written as
$d^\top \mathcal{L}_i(x) d
\;\le\;
\varepsilon_i\, d^\top d,
\,
\forall d \in \mathbb{R}^n,\;
\forall x \in \cal{S}(F,g),$
or, equivalently, in matrix form,
\begin{align}
\mathcal{L}_i(x)
\;\preceq\;
\varepsilon_i I,
\qquad
\forall x \in \cal{S}(F,g),
\label{eq:Hi_LMI}
\end{align}
which bounds the curvature of the $i$--th closed--loop facet map uniformly
over the safe set.

When nonlinear terms outside the input channels have uniformly bounded second derivatives, the relaxed second-order condition \eqref{eq:eps_hess_matrix} is state independent (i.e., $\mathcal L_i(x)=\mathcal L_i$) and admits an LMI formulation. Using the S--procedure~\cite{convex}, the second-order condition can be enforced
over the polyhedral direction set
$\cal T = \big\{ d\in\mathbb R^n \ \big|\ F_{i,:}d \le 0,\ \forall i\in\mathcal I(x)
\big\}$, where $\mathcal I(x)=\{\,i\mid F_{i,:}x=g_i\,\}$ denotes the index set of
facets active at the boundary point $x\in\partial\cal S(F,g)$ at which the directional condition is evaluated. This restriction is sufficient since only directions that preserve feasibility of the active facets can increase the facet map locally.
Then, the S--procedure yields a sufficient LMI condition
\begin{alignat}{2}
\label{eq:Sproc_LMI}
&\begin{bmatrix}
\mathcal L_i-\varepsilon_i I & 0\\[1mm]
0 & 0
\end{bmatrix}
+\sum_{m\in \mathcal I(x)}\tau_m
\begin{bmatrix}
0 & F_{m,:}^\top\\[1mm]
F_{m,:} & 0
\end{bmatrix}
\;\preceq\; 0,
\end{alignat}
with $\tau_m \ge 0$ for all $m\in \mathcal I(x)$.

For polynomial nonlinear systems, the relaxed second--order condition can
alternatively be enforced via a sum--of--squares (SOS) \cite{SOS} constraint of the form
\[
d^\top\!\big(\mathcal L_i(x)-\varepsilon_i I\big)d
\ \text{is SOS},
\qquad \forall d \in \mathcal T,\ \forall x \in \cal S(F,g).
\]
Using the S--procedure, this condition can be equivalently converted into a
single SOS constraint over the joint variables $(x,d)$. In particular, as
shown in \cite{SOSsoc}, by introducing SOS multipliers associated with the
constraints defining $\mathcal T$ and $\cal S(F,g)$, the above condition
can be enforced through an SOS optimization problem in the augmented state
$(x,d)$. \vspace{3pt}
\end{remark}


\textcolor{blue}{We next state a corollary demonstrating that the preceding theorem ensures Lyapunov stability with a polyhedral Lyapunov function.}

\begin{corollary}
\label{cor:polyhedral_lyap_relaxed}
Let the conditions of Theorem~2 hold and define
\begin{equation}\label{eq:V_def_cor}
V(x)=\max_{i\in \mathcal I_s}\frac{F_{i,:} \, x}{g_i}.
\end{equation}
Then, $V:\cal S(F,g)\to\mathbb R_{\ge 0}$ is a Lyapunov function for the closed-loop
system in Theorem~2 if $0 \in \mathrm{int} \cal S(F,g)$.
\end{corollary}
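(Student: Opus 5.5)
The plan is to use the standard Minkowski-gauge argument for $\lambda$-contractive polytopes, with Theorem~2 supplying the one-step contraction.

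\textbf{Step 1 (properties of $V$).} Since $0\in\mathrm{int}\,\cal S(F,g)$, we have $F\cdot 0<g$, so $g_i>0$ for all $i\in\mathcal I_s$. Then $V(0)=0$, and $V$ is continuous, convex and positively homogeneous, with $V(\alpha x)=\alpha V(x)$ for $\alpha\ge 0$. Boundedness of the polytope gives positive definiteness. Indeed, if $x\neq 0$ and $F_{i,:}x\le 0$ for all $i$, then $\alpha x\in\cal S(F,g)$ for every $\alpha\ge0$, which contradicts boundedness. Hence $V(x)>0$ for $x\neq 0$. Since $V$ is a maximum of linear functions, there also exist constants $0<a\le b$ with $a\|x\|\le V(x)\le b\|x\|$. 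Finally, $\cal S(F,g)=\{x:V(x)\le 1\}$, and more generally $\mu\cal S(F,g)=\{x:V(x)\le\mu\}$ for $\mu>0$. In particular $V\ge 0$ on $\cal S(F,g)$, which confirms the stated codomain.

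\textbf{Step 2 (decrease).} We use the conclusion of Theorem~2: for every $x\in\cal S(F,g)$, $x^+:=(X_1G_{K,1})x+(X_1G_{K,2})Q(x)$ satisfies $Fx^+\le\lambda g$, so $V(x^+)\le\lambda$. This bound, however, holds only uniformly on the whole set. To get a state-dependent decrease $V(x^+)\le\lambda V(x)$ we need to apply the contraction on the level set $V(x)\,\cal S(F,g)$, and the nonlinear map is not homogeneous, so Theorem~2 cannot simply be rescaled.

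Our first approach is to redo the appendix argument of Theorem~2 with $g$ replaced by $\mu g$, where $\mu=V(x)\le 1$, and track which inequalities persist. The vertices of $\mu\cal S$ are $\mu x_\ell$. The linear part $P_\ell F=FX_1G_{K,1}-\mathrm{diag}(\varepsilon)\mathbf 1x_\ell^\top$ scales exactly. The curvature slack term $\frac12\varepsilon\|\mu x_\ell\|^2$ scales by $\mu^2\le\mu$. The nonlinear vertex term $FX_1G_{K,2}Q(\mu x_\ell)$ needs a bound by $\mu t$. For this we will use $Q(0)=0$ and $\partial Q/\partial x(0)=0$ (by construction in \eqref{Q}), together with the convexity of the DC majorant $R$ from Lemma~2 along the segment from $0$ to $x_\ell$. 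Convexity gives $\tilde H_i(\mu x_\ell)\le(1-\mu)\tilde H_i(0)+\mu\tilde H_i(x_\ell)$ with $\tilde H_i(0)=0$, which is exactly the homogeneity-type inequality required.

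This yields the cleaner route we will actually write: apply convexity of $\tilde H_i(x)=H_i(x)+\frac{\varepsilon_i}{2}\|x\|^2$ along the ray from $0$ to the boundary point $y=x/\mu$, which lies in $\cal S(F,g)$. This gives
\[
H_i(x)\le \mu\big(H_i(y)+\tfrac{\varepsilon_i}{2}\|y\|^2\big)-\tfrac{\varepsilon_i}{2}\mu^2\|y\|^2 .
\]
Combining this with the vertex epigraph bound obtained in the proof of Theorem~2, namely $H_i(y)+\frac{\varepsilon_i}{2}\|y\|^2\le\lambda g_i$ (which includes the slack), gives $F_{i,:}x^+\le\mu\lambda g_i$. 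Hence $V(x^+)\le\lambda V(x)$.

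\textbf{Step 3 (conclusion).} With $V(x^+)-V(x)\le-(1-\lambda)V(x)$ for $\lambda<1$, or $\le 0$ for $\lambda=1$, the sandwich bounds from Step~1 give asymptotic (respectively Lyapunov) stability. Invariance of $\cal S(F,g)$ from Theorem~2 keeps trajectories in the domain of $V$.

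\textbf{Main obstacle.} The main obstacle is Step~2: extracting from the proof of Theorem~2 the stronger pointwise inequality $\max_{y\in\cal S}\big(H_i(y)+\frac{\varepsilon_i}{2}\|y\|^2\big)\le\lambda g_i$, rather than only $H_i\le\lambda g_i$. This follows because the constraints \eqref{eq:margin_merged_opt}--\eqref{eq:epi_vertices_matrix} bound the convexified map at the vertices. If that fails, the fallback is to claim only $V(x^+)\le\lambda$ together with invariance, which yields stability in the sense of boundedness but not a strict decrease.
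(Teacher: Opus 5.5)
Your route differs from the paper's. The paper handles the boundary using $V\equiv 1$ on $\partial\mathcal{S}(F,g)$. It then passes to the interior by invoking convexity of the gauge $V$ (renamed $\overline H$) through the chain $H(\varepsilon\tilde x)\le\overline H(\varepsilon\tilde x)\le\varepsilon\overline H(\tilde x)=\varepsilon H(\tilde x)$. That chain never uses the curvature constraint \eqref{eq:eps_hess_matrix}, and its first and last relations compare the facet map with the gauge without justification. Your argument supplies the mechanism a non-homogeneous map actually needs. Let $\tilde h_i=h_i+\frac{\varepsilon_i}{2}\|\cdot\|_2^2$ and $\tilde H_i(x)=c_i^\top x+\tilde h_i(x)$. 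Convexity of $\tilde H_i$ on $\mathcal{S}(F,g)$ together with $\tilde H_i(0)=0$ gives the sub-homogeneity $\tilde H_i(\mu y)\le\mu\tilde H_i(y)$, and the subtracted term $\frac{\varepsilon_i}{2}\mu^2\|y\|_2^2$ only helps.

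The step you flag as the main obstacle does not follow for the reason you give. Constraint \eqref{eq:epi_vertices_matrix} bounds $\tilde h_i$ at the vertices, not $\tilde H_i$. If you bound $c_i^\top x_\ell$ using the multiplier $P_\ell$ of the same vertex, you only get $\tilde H_i(x_\ell)\le\lambda g_i+\frac{\varepsilon_i}{2}\|x_\ell\|_2^2$. The missing idea is to pair each point with a different vertex. Fix any $y\in\mathcal{S}(F,g)$ and any $\ell\in\mathcal I_\ell$. Convexity of $\tilde h_i$ and \eqref{eq:epi_vertices_matrix} give $\tilde h_i(y)\le t_i$. Row $i$ of \eqref{eq:pi_matching} together with $Fy\le g$ gives $c_i^\top y\le \pi_{i,\ell}^\top g+\varepsilon_i x_\ell^\top y$. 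Then \eqref{eq:margin_merged_opt} yields
\[
\tilde H_i(y)\le \lambda g_i-\tfrac{\varepsilon_i}{2}\|x_\ell\|_2^2+\varepsilon_i x_\ell^\top y=\lambda g_i+\tfrac{\varepsilon_i}{2}\big(\|y\|_2^2-\|y-x_\ell\|_2^2\big).
\]
Now choose $x_\ell$ to be a vertex farthest from $y$. The convex function $z\mapsto\|y-z\|_2$ attains its maximum over the polytope at a vertex, and since $0\in\mathcal{S}(F,g)$ that maximum is at least $\|y\|_2$. Hence $\tilde H_i(y)\le\lambda g_i$ on all of $\mathcal{S}(F,g)$. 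This is exactly what your Step~2 needs, so the fallback is unnecessary.

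One further correction: $\tilde H_i(0)=0$ is not ``by construction'', since \eqref{Q} gives $Q(0)=S(0)$. Only $\partial Q/\partial x(0)=0$ is automatic. You need $S(0)=0$, i.e., $Q(0)=0$. The paper also uses this tacitly, e.g., $q_0=Q(0)=0$ in the proof of Proposition~1. The assumption is unavoidable: $V(x^+)\le\lambda V(0)=0$ forces $Fx^+\le 0$, hence $x^+=0$ at $x=0$. With these two additions your proof is complete.
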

\noindent \textit{Proof:} See Appendix. \hfill $\blacksquare$ \vspace{3pt}


\textcolor{blue}{In contrast to Theorem~2, Theorem~1 does not exploit safe-set geometry and instead uniformly upper-bounds nonlinear effects. Nevertheless, for weakly Lipschitz nonlinear systems with strong nonconvexities, Theorem~1 can outperform Theorem~2. 
The next proposition bridges these approaches by combining Theorem~1 with geometry-aware certificates of Theorem 2: convexifiable nonlinear terms are treated exactly, while the remaining terms are bounded via componentwise Lipschitz envelopes. }


\begin{prop}
\textcolor{blue}{Consider the system \eqref{system} with $w(t)\equiv 0$ under Assumptions~1 and~3--5 and
the data-based controller parameterization \eqref{contNew}. 
Suppose the following convex optimization problem is feasible in the decision variables
\(
G_{K,1},\,G_{K,2},\,P \in \mathbb{R}^{s \times s},\,t\in\mathbb R^s,\;
\{z_i,r_i,k_i,a_i\}_{i\in\mathcal I_s}
\)
\begin{subequations}
\label{eq:hyb_opt}
\begin{align}
& \min \quad \sum_{i\in\mathcal I_s}\mathbf 1^\top k_i
\label{eq:hyb_obj}\\
& P g + t \le \lambda g, \label{eq:hyb_margin}\\
& P F = F X_1 G_{K,1}, \label{eq:hyb_PF}\\
& F_{i,:}X_1G_{K,2}=z_i^\top+r_i^\top,\qquad \forall i\in\mathcal I_s, \label{eq:hyb_split}\\
& \frac{\partial^2 (z_i^\top Q(x))}{\partial x^2}  \succeq 0,
\, \forall x\in\cal S(F,g),\  \forall i \in\mathcal I_{+}\cup\mathcal I_{0} , \label{eq:hyb_convex}\\
& -k_i \le L_Q\odot r_i \le k_i,\qquad \forall i\in\mathcal I_s, \label{eq:hyb_abs_r}\\
& -a_i \le L_Q\odot (z_i+r_i) \le a_i,\qquad \forall i\in\mathcal I_s, \label{eq:hyb_abs_b}\\
& \mathbf 1^\top k_i \le \mathbf 1^\top a_i,\qquad \forall i\in\mathcal I_s, \label{eq:hyb_dominate}\\
&   z_i^\top q_\ell - P_{i,:}\!\big(g-Fx_\ell\big) + R_0\,\mathbf 1^\top k_i \le t_i,
\, \forall \ell\in\mathcal I_\ell,\ \forall i\in\mathcal I_s,
\label{eq:hyb_epi}
\\
& V_0 [G_{K,1}\ \ G_{K,2}] = I, \\ & P \ge 0, \quad k_i \ge 0, \quad a_i  \ge 0, \, \forall i\in\mathcal I_s,
\label{eq:hyb_data}
\end{align}
\end{subequations}
where $R_0=\max_{\ell\in\mathcal I_\ell}\|x_\ell\|_2$ and $\mathcal I_{+}$ and $\mathcal I_{0}$ are defined in Definition 5.
Then, $\cal S(F,g)$ is $\lambda$-contractive for the closed-loop system with gains $K_1=U_0 G_{K,1}$ and $K_2=U_0 G_{K,2}$. Moreover, 
the induced conservatism is never worse than that of Theorem~1 and becomes
strictly smaller whenever convex structure is present.}
\end{prop}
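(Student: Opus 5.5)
We need to show that for every $x\in\mathcal S(F,g)$ and every $i\in\mathcal I_s$, the facet map $H_i(x)=c_i^\top x+h_i(x)$ of \eqref{Hf} satisfies $H_i(x)\le\lambda g_i$. By Lemma~1, with $W_0=0$, the closed-loop representation \eqref{data-f} holds under $V_0[G_{K,1}\ G_{K,2}]=I$. Using \eqref{eq:hyb_split}, I would write
\[
H_i(x)=c_i^\top x+z_i^\top Q(x)+r_i^\top Q(x).
\]
Each of the three pieces is then bounded separately.

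\textbf{Linear part.} By \eqref{eq:hyb_PF}, $c_i^\top x=P_{i,:}Fx$. With $P\ge0$ and $Fx\le g$, this gives $c_i^\top x\le P_{i,:}g$. I would keep the sharper, vertex-dependent form as well: for every vertex,
\[
c_i^\top x_\ell=P_{i,:}g-P_{i,:}(g-Fx_\ell).
\]

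\textbf{Lipschitz remainder.} By Assumption~6, with $Q(0)=0$,
\[
r_i^\top Q(x)=r_i^\top\big(Q(x)-Q(0)\big)\le\|L_Q\odot r_i\|_1\|x\|_2\le R_0\,\mathbf 1^\top k_i,
\]
using \eqref{eq:hyb_abs_r}. Here $\max_{x\in\mathcal S}\|x\|_2=R_0$, because the norm is convex and its maximum over a polytope is attained at a vertex. That $Q(0)=0$ follows from \eqref{Q}, provided $S(0)=0$, which is consistent with $x=0$ being an equilibrium.

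\textbf{Convex part.} For $i\in\mathcal I_+\cup\mathcal I_0$, \eqref{eq:hyb_convex} makes $\phi_i(x):=c_i^\top x+z_i^\top Q(x)$ convex on $\mathcal S$. Its maximum is therefore attained at some vertex $x_\ell$, where
\[
\phi_i(x_\ell)=P_{i,:}g-P_{i,:}(g-Fx_\ell)+z_i^\top q_\ell.
\]
Combining this with the remainder bound and \eqref{eq:hyb_epi} gives
\[
H_i(x)\le P_{i,:}g+t_i\le\lambda g_i,
\]
by \eqref{eq:hyb_margin}. This also handles the interaction between the linear and convex parts exactly, because both are evaluated at the same vertex.

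\textbf{The paired facet (main obstacle).} Let $j$ be the partner of $i\in\mathcal I_+$, so $F_{j,:}=-F_{i,:}$. Then $H_j=-H_i$, and $z_j^\top Q$ need not be convex, since \eqref{eq:hyb_convex} is imposed only on representatives. I would attempt the max--min logic of the motivating examples. The needed bound $H_j\le\lambda g_j$ amounts to $\min_{\mathcal S}H_i\ge-\lambda g_j$. I would try to control this minimum through the vertex epigraph constraint \eqref{eq:hyb_epi} written for index $j$, whose linear part again comes from $P_{j,:}$.

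If this fails, the fallback is to read the split \eqref{eq:hyb_split} for $j$ as $z_j=-z_i$, absorbing any mismatch into $r_j$. The concave piece $-z_i^\top Q$ is then bounded by a supporting hyperplane of the convex function $z_i^\top Q$, in the spirit of Lemma~2, plus a Lipschitz term. I expect this step to need the most care; the stated proof likely relies on the symmetric structure implicitly.

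\textbf{Comparison with Theorem~1.} The choice $z_i=0$, $r_i=(F_{i,:}X_1G_{K,2})^\top$ is always admissible. It satisfies \eqref{eq:hyb_dominate} with $k_i=a_i$, and under it \eqref{eq:hyb_epi} collapses to the Lipschitz certificate of Theorem~1 with $P_{i,:}(g-Fx_\ell)\ge0$ dropped. Hence the optimal cost is no larger than that of Theorem~1, and \eqref{eq:hyb_dominate} explicitly caps it by the pure-Lipschitz value. Whenever a nonzero convex part $z_i$ exists, moving it out of $r_i$ strictly decreases $\mathbf 1^\top k_i$, because that part is then handled exactly at the vertices rather than through $L_Q$. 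This gives strictly smaller conservatism.
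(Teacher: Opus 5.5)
The gap is the step you flagged yourself: facets $j\notin\mathcal I_{+}\cup\mathcal I_{0}$. For facets $i\in\mathcal I_{+}\cup\mathcal I_{0}$ your argument is correct and is the paper's. You split $H_i=c_i^\top x+z_i^\top Q+r_i^\top Q$ and use $c_i^\top x_\ell=P_{i,:}g-P_{i,:}(g-Fx_\ell)$. You evaluate the convex part $\phi_i(x)=c_i^\top x+z_i^\top Q(x)$ at its maximizing vertex. You bound the residual by $R_0\mathbf 1^\top k_i$, tacitly using $0\in\mathcal S$ and $Q(0)=0$, as the paper does.

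Neither of your routes closes the case where $j$ is paired with $i\in\mathcal I_{+}$.
\begin{itemize}
\item The max--min route needs $\min_{\mathcal S}H_i$ to be attained at a vertex, which fails for convex functions. The row-$j$ constraint \eqref{eq:hyb_epi} only samples $z_j^\top Q$ at vertices. Nothing makes $c_j^\top x+z_j^\top Q$ convex or forces $z_j=-z_i$; \eqref{eq:hyb_split} gives only $z_j+r_j=-(z_i+r_i)$.
\item The fallback re-splits the optimizer's output, so \eqref{eq:hyb_abs_r} and \eqref{eq:hyb_epi} are no longer certified for row $j$. A tangent-plane majorant of the concave $-z_i^\top Q$ also introduces gradient terms that appear in no constraint. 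In Theorem~2 that role is played by the $x_\ell$-dependent matching condition on $P_\ell$.
\end{itemize}

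In fact the conclusion is false as stated. Take the following setting:
\begin{itemize}
\item $n=N=1$ and $Q(x)=x^2$, so $L_Q=2$.
\item $F=[1\ \ {-1}]^\top$ and $g=[1\ \ 0.1]^\top$, so $\mathcal S=[-0.1,\,1]$, $R_0=1$, and $\mathcal I_+=\{1\}$.
\item Persistently exciting data from $x^+=u$, with $G_K$ chosen so that $K=[-1\ \ 1]$. The closed loop is then $x^+=-x+x^2$.
\end{itemize}
Set $z_1=1$, $z_2=-1$, $r_1=r_2=0$, $k_1=k_2=0$, $a_1=a_2=2$, $P=\left[\begin{smallmatrix}0&1\\ 1&0\end{smallmatrix}\right]$ and $t=[0.01\ \ {-1}]^\top$. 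These satisfy every constraint with $\lambda=1$; the vertex values of $H_2(x)=x-x^2$ are $0$ and $-0.11$. Yet $x=0.5$ maps to $-0.25\notin\mathcal S$, because $H_2$ peaks at $0.25>0.1$ in the interior.

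The paper's proof disposes of this case by asserting $z_j^\top Q=-z_i^\top Q$ and that both extrema of the convex $z_i^\top Q$ lie at vertices. Neither holds, so your suspicion was right.

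A correct version needs an extra hypothesis.
\begin{itemize}
\item Either impose \eqref{eq:hyb_convex} for all $i\in\mathcal I_s$; then your argument is already complete.
\item Or assume $\mathcal S=-\mathcal S$ with $g_j=g_i$ for each pair. Convexity and $\phi_i(0)=0$ give $-\phi_i(x)\le\phi_i(-x)\le\max_{\mathcal S}\phi_i$. Hence $H_j(x)=-H_i(x)\le\max_{\mathcal S}\phi_i+R_0\mathbf 1^\top k_i\le(Pg)_i+t_i\le\lambda g_j$, using row $i$ alone.
\end{itemize}

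Separately, your claim that a nonzero $z_i$ strictly decreases $\mathbf 1^\top k_i$ is unjustified. With $b_i=(F_{i,:}X_1G_{K,2})^\top$, the quantity $\|L_Q\odot(b_i-z_i)\|_1$ can exceed $\|L_Q\odot b_i\|_1$, and $z_i$ also adds $z_i^\top q_\ell$ to the vertex terms. Feasibility of $z_i=0$ gives only the weak comparison. Even that holds only against a certificate using the same componentwise envelope, not the Euclidean bound in \eqref{th1bound}.
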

 \noindent \textit{Proof:} See Appendix. \hfill $\blacksquare$
 \vspace{6pt}

\textcolor{blue}{Proposition~1 alleviates the conservatism introduced by global convexification of highly nonconvex functions by exploiting convex structures and introducing Lipschitz-based bounds for the remaining nonconvex nonlinearities. In particular, the term $z_i$ represents the portion of the nonlinear facet map that admits exact convexification, while the residual term $r_i$ is bounded using the remaining Lipschitz envelope through the slack variable $k_i$.
The next theorem shows how convexity can instead be enforced {locally} on the active facets through a vertex-based (face-supported) interpolation scheme. Rather than certifying all facet inequalities globally, contractivity is imposed only on the signed facet maps that are active on the minimal face associated with each vertex. This construction retains a single certificate per vertex, while allowing the curvature conditions to adapt to the local geometry of the safe set.
As a result, the effective convexification slack required to guarantee invariance is substantially smaller than the uniform slack imposed by Theorem~2 or Proposition~1, leading to a marked reduction in curvature-induced tightening.
} \vspace{3pt}
\vspace{3pt}

\begin{thm}
\label{thm:th7_face_restricted}
Consider the system~\eqref{system} with $w(t)\equiv 0$ under Assumptions~1 and~3--5. 
Define the vertex set $\{x_\ell\}_{\ell\in\mathcal I_\ell}$ of $\cal S(F,g)$, the active facet set $\mathcal I_f(x)$ and the corresponding minimal face $\mathcal F(x)$, and for any face $\mathcal F$ define its active facet set $\mathcal I_f(\mathcal F)$, as in Definition~5. Let $T_{\mathcal F}\in\mathbb R^{n\times d_{\mathcal F}}$ have columns spanning the tangent subspace of $\mathcal F$. Define $q_\ell=Q(x_\ell)$ for all ${\ell} \in \mathcal I_{\ell}$.
Assume that for each vertex index $\ell\in\mathcal I_\ell$ there exist decision variables
\(
G_{K_\ell,1},\,G_{K_\ell,2},\,P_\ell\in\mathbb R^{s\times s},\,t_\ell\in\mathbb R^s_{\ge 0},\,\varepsilon_\ell\in\mathbb R^s_{\ge 0}
\)
satisfying
\begin{subequations}
\label{eq:th7_face_opt}
\begin{align}
& \min_{G_{K_\ell,1},\,G_{K_\ell,2},\,P_\ell,\,t_\ell,\,\varepsilon_\ell}
\quad \|\varepsilon_\ell\|_1
\label{eq:th7_obj}\\
& P_\ell F = F X_1 G_{K_\ell,1},
\label{eq:th7_match}\\
& P_\ell g + t_\ell \le \lambda g,
\label{eq:th7_margin}\\
& F_{i,:}X_1G_{K_\ell,2} q_{\ell}
+ \tfrac12\,(\varepsilon_\ell)_i \|x_\ell\|_2^2
\le (t_\ell)_i, \,  \forall i,\ \forall x_\ell \in \mathcal I_{\ell},
\label{eq:th7_epi_face}\\
& T_{\mathcal F}^\top
\frac{\partial^2 H_i^{(\ell)}(x)}{\partial x^2}
T_{\mathcal F}
\succeq -(\varepsilon_\ell)_i I_{d_{\mathcal F}},
\nonumber \\ & \forall \mathcal F,\ \forall i\in\mathcal I_f(\mathcal F),\ \forall x\in\mathcal F,
\label{eq:th7_soc_face}\\
& V_0\begin{bmatrix}G_{K_\ell,1}&G_{K_\ell,2}\end{bmatrix}=I,
\label{eq:th7_data}\\
& P_\ell \ge 0,\quad t_\ell\ge 0,\quad \varepsilon_\ell\ge 0,
\label{eq:th7_adm}
\end{align}
\end{subequations}
where
\begin{align}
H_i^{(\ell)}(x)
= F_{i,:}X_1G_{K_\ell,1}x + F_{i,:}X_1G_{K_\ell,2}Q(x).
\label{eq:th7_H}
\end{align}
Define $K_{\ell,1}=U_0G_{K_\ell,1}$ and $K_{\ell,2}=U_0G_{K_\ell,2}$. For any $x\in\cal S(F,g)$, choose weights $\theta_\ell(x)\ge 0$ supported only on the vertices of the minimal face $\mathcal F(x)$ such that
\begin{align}
\sum_{\ell:\,x_\ell\in\mathcal I_f}\theta_\ell(x)=1,
\,
x=\sum_{\ell:\,x_\ell\in\mathcal I_f}\theta_\ell(x)x_\ell,
\label{eq:th7_theta}
\end{align}
and apply the face-supported interpolated controller
\begin{align}
u(x)
=\sum_{\ell:\,x_\ell\in \mathcal I_f}\theta_\ell(x)\Big(K_{\ell,1}x+K_{\ell,2}Q(x)\Big).
\label{eq:th3_policy_Pm}
\end{align}
Then, $\cal S(F,g)$ is $\lambda$-contractive for the resulting closed-loop system.
\end{thm}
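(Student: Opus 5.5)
Fix $x\in\mathcal S(F,g)$ and write $x^+$ for its successor under the interpolated policy \eqref{eq:th3_policy_Pm}. The goal is $F_{i,:}x^+\le\lambda g_i$ for every $i\in\mathcal I_s$.

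\textbf{Step 1: reduce to a convex combination of vertex facet maps.} Since the controller is a convex combination with weights $\theta_\ell(x)$, and the data representation of Lemma~1 (with $W_0=0$) gives $\bar A_1+BK_{\ell,1}=X_1G_{K_\ell,1}$ and $A_2+BK_{\ell,2}=X_1G_{K_\ell,2}$ for each $\ell$, linearity in the gains yields
\[
F_{i,:}x^+=\sum_{\ell}\theta_\ell(x)\,H_i^{(\ell)}(x),
\]
with $H_i^{(\ell)}$ as in \eqref{eq:th7_H}. It therefore suffices to show $H_i^{(\ell)}(x)\le\lambda g_i$ for every vertex $x_\ell$ of the minimal face $\mathcal F(x)$ and every $i$. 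Note that $x$ and all vertices carrying nonzero weight lie in $\mathcal F(x)$.

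\textbf{Step 2: bound $H_i^{(\ell)}$ on the face via DC majorization.} Fix $\ell$ with $x_\ell\in\mathcal F=\mathcal F(x)$. Split $H_i^{(\ell)}(x)=c_{i,\ell}^\top x+h_{i,\ell}(x)$ as in \eqref{ci}. For the linear part, \eqref{eq:th7_match} with $P_\ell\ge0$ and $Fx\le g$ gives
\[
c_{i,\ell}^\top x=(P_\ell)_{i,:}Fx\le (P_\ell)_{i,:}g .
\]
For the nonlinear part, restrict to the face, parameterizing $x=x_\ell+T_{\mathcal F}y$. By \eqref{eq:th7_soc_face}, the map $y\mapsto H_i^{(\ell)}+\tfrac{(\varepsilon_\ell)_i}{2}\|x-x_\ell\|_2^2$ is convex on $\mathcal F$ for active $i$. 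This is where Lemma~2 enters, with $\phi=\tfrac{\varepsilon}{2}\|\cdot\|^2$ linearized at $x_\ell$. Because the linear part has a Hessian that vanishes, the curvature condition really concerns $h_{i,\ell}$. The convex majorant is maximized over the polytope $\mathcal F$ at one of its vertices. Combining with the epigraph constraint \eqref{eq:th7_epi_face} evaluated at each vertex $x_{\ell'}$ of $\mathcal F$ then gives
\[
h_{i,\ell}(x)-\text{(linearization correction)}\le (t_\ell)_i .
\]
Adding the two parts and using \eqref{eq:th7_margin} yields $H_i^{(\ell)}(x)\le\lambda g_i$. The $\tfrac12\varepsilon\|x_\ell\|^2$ terms are bookkeeping mirroring Theorem~2, and I would reuse that proof's algebra verbatim on the face.

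\textbf{Step 3: handle inactive facets, the main obstacle.} The curvature condition \eqref{eq:th7_soc_face} is imposed only for $i\in\mathcal I_f(\mathcal F)$, so it must be argued that inactive facets need no curvature control. My first attempt is the following. For $x$ in the relative interior of a face, inactive facets satisfy $F_{i,:}x<g_i$. I would then argue contractivity along boundary points by continuity, or by a Nagumo-type boundary argument, so that only active constraints can be violated first. Interior points $x$ would be handled because $\mathcal F(x)=\mathcal S$, in which case every facet enters the curvature constraint through $\mathcal F=\mathcal S$ itself, with an empty active set.

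This last point is delicate. If $\mathcal I_f(\mathcal S)$ is empty, the face-wise condition gives nothing for interior points, and a global curvature bound like \eqref{eq:eps_hess_matrix} would be needed. I would therefore read the quantifier "$\forall\mathcal F$" as including $\mathcal S$ with all signed facets checked, which is how Theorem~2's bound is recovered. If that reading fails, a fallback is to impose \eqref{eq:th7_epi_face} for all $i$, which the statement already does, and to bound the nonconvex residual using the vertex epigraph on each face containing $x$.
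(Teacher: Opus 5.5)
Your Step~1 and the face-wise convexification idea in Step~2 match the paper's proof, but Step~3 is a genuine gap, and none of your three suggestions closes it. $\lambda$-contractivity is a one-step requirement: $F_{i,:}x^+\le\lambda g_i$ for every $x\in\mathcal S(F,g)$ and every $i\in\mathcal I_s$. Knowing $F_{i,:}x<g_i$ for an inactive facet says nothing about $F_{i,:}x^+$, and in discrete time there is no ``first violation'', so continuity or Nagumo-type arguments do not apply. For interior points, $\mathcal F(x)=\mathcal S(F,g)$ and $\mathcal I_f(\mathcal S(F,g))=\emptyset$, so \eqref{eq:th7_soc_face} gives no information there. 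Vertex epigraph constraints alone bound nothing away from the vertices, so your fallback fails as well.

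The paper's proof does not supply the missing idea either. It derives the face bound only for $i\in\mathcal I_f(\mathcal F)$, then restates it for all $i$ and applies it at $z=x$. The only justification is the assertion that only $\mathcal I_f(\mathcal F(x))$ matters for membership in $\lambda\mathcal F(x)$, which is exactly the step you flagged. Its simulation section even concedes that curvature conditions on active facets alone do not guarantee invariance.

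In fact the statement cannot be proved as written. Take $n=1$ and $\mathcal S=[-1,1]$. Every face is either a vertex ($d_{\mathcal F}=0$) or $\mathcal S$ itself (no active facet), so \eqref{eq:th7_soc_face} is vacuous. Use the dictionary $S(x)=x^4-x^2$, so $Q=S$ and $q_\ell=0$, and choose gains giving $x^+=-100\,Q(x)$ (realizable from generic data when $b\neq0$). Then all constraints hold with $P_\ell=0$, $t_\ell=0$, $\varepsilon_\ell=0$, yet $x^+=25$ at $x=1/\sqrt2$. Your strengthened reading (curvature imposed for every facet on all of $\mathcal S$) does make the result true. But then each vertex controller is contractive on its own, and the face restriction and interpolation play no role. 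That proves a Theorem~2-type statement, not this one.

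Step~2 also does not close as you describe it, because Theorem~2's algebra cannot be reused verbatim. The Lemma~2 majorant at $x_\ell$ is exactly your centered function $H_i^{(\ell)}(x)+\tfrac12(\varepsilon_\ell)_i\|x-x_\ell\|_2^2$. Compare its values at the vertices $x_{\ell'}$ of $\mathcal F$ with \eqref{eq:th7_epi_face}, which carries $\|x_{\ell'}\|_2^2$. This leaves the residual $(\varepsilon_\ell)_i\big(\tfrac12\|x_\ell\|_2^2-x_\ell^\top x_{\ell'}\big)$. Theorem~2 absorbs that residual through the shifted matching condition \eqref{eq:pi_matching} and the extra term in \eqref{eq:margin_merged_opt}. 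Theorem~3's \eqref{eq:th7_match}--\eqref{eq:th7_margin} contain neither, and the residual is positive, e.g., at the far endpoint of a box edge.

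The paper instead uses $\phi=\tfrac12(\varepsilon_\ell)_i\|x\|_2^2\ge 0$ without linearization. Then $H_i^{(\ell)}\le\widetilde H_i^{(\ell)}=H_i^{(\ell)}+\tfrac12(\varepsilon_\ell)_i\|x\|_2^2$, which is convex on $\mathcal F$, so its maximum over $\mathcal F$ is attained at a vertex $x_{\ell'}$ of $\mathcal F$. There, \eqref{eq:th7_match} (with $P_\ell\ge0$) and \eqref{eq:th7_epi_face} give $\widetilde H_i^{(\ell)}(x_{\ell'})\le(P_\ell g)_i+(t_\ell)_i\le\lambda g_i$. Your reading of \eqref{eq:th7_epi_face} as imposed at every vertex $x_{\ell'}$ for each controller $\ell$ is the one this step needs; the paper's proof blurs the controller and vertex indices at this point.
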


\noindent \textit{Poof:} See Appendix. 
\hfill$\blacksquare$

\textcolor{blue}{While Theorem~2 enforces a single global upper bound on all facet maps
simultaneously, the face-restricted Theorem~3 enforces curvature 
conditions only on the facets active on each face. This construction avoids the conservatism introduced by splitting their maximizers across different vertices. Moreover, curvature slack variables are enforced only along the tangent
subspaces of the active faces, which significantly reduces the effective
convexification compared to the global second-order conditions required by
Theorem~2 or Proposition~1.}
\textcolor{blue}{
For highly non-convex functions and typically with non-affine Hessians, further reductions in conservatism can be obtained by
combining the face-restricted construction with the min--max (sign-symmetric)
reasoning of Proposition~1.
This extension is omitted for brevity, as it follows by direct adaptation of the
arguments therein.}

\begin{remark}
\textcolor{blue}{Consider again Motivating Example~2 under the setting of Remark~6.
For the symmetric constraint $-r \le x_1 \le r$, the two facet inequalities are
$H_{1,+}(x)=x_1(t{+}1)\le \lambda r$ and $H_{1,-}(x)=-x_1(t{+}1)\le \lambda r$.
Under the face-restricted policy of Theorem~\ref{thm:th7_face_restricted},
contractivity conditions are enforced only on the facets active on the minimal
face containing the current state.
At $x_1=r$, the minimal face is $\{x_1=r\}$ and only the upper facet is active,
so the condition $H_{1,+}(x)\le \lambda r$ is enforced along this face.
Similarly, at $x_1=-r$, only the lower facet $H_{1,-}(x)\le \lambda r$ is enforced.
As a result, the second-order conditions of Theorem~\ref{thm:th7_face_restricted}
are imposed only along the active faces and do not need to accommodate curvature
sign changes across the entire interval $[-r \quad r]$.
}
\end{remark}

\begin{remark}
\textcolor{blue}{Let the admissible input set be the polytope
\begin{align}
\mathcal U
= \{u\in\mathbb R^{m} \mid f_{u,j}^\top u \le (g_u)_j,\ j=1,\dots,r_u\}.
\label{eq:U_def_rows}
\end{align}
For each vertex-indexed controller $\ell$, define the $\ell$-th input map
\begin{align}
 & u_\ell(x)=K_{\ell,1}x+K_{\ell,2}Q(x),
\nonumber \\ &
\psi_{j}^{(\ell)}(x)=f_{u,j}^\top u_\ell(x),
\ \ j=1,\dots,r_u.  
\end{align}
Under the face-supported interpolated controller \eqref{eq:th3_policy_Pm},
and by linearity of $F_u$, one has
\begin{align}
f_{u,j}^\top u(x)
&=
\sum_{\ell:\,x_\ell\in\mathrm{vert}(\mathcal F(x))}\theta_\ell(x)\,
\psi_{j}^{(\ell)}(x),
\qquad \forall j.
\end{align}
Therefore, a sufficient condition for $u(x)\in\mathcal U$ for all $x\in\cal S(F,g)$ is that
\begin{align}
\psi_{j}^{(\ell)}(x)
\le (g_u)_j,
\qquad \forall x\in\cal S(F,g),\ \forall \ell,\ \forall j.
\end{align}
These conditions can be enforced in a convex and face-restricted manner.
For every face $\mathcal F$ of $\cal S(F,g)$, for every $\ell$, and for every input row $j$,
introduce scalars $\gamma_{\ell,j}\ge 0$ and $\tau_{\ell,j}$ and impose
\begin{align}
& \psi_{j}^{(\ell)}(v)
+\tfrac12\,\gamma_{\ell,j}\,\|v\|_2^2
\le \tau_{\ell,j},
\qquad \forall v\in\mathrm{vert}(\mathcal F), 
\label{eq:th7_input_epi_face}\\
& T_{\mathcal F}^\top
\frac{\partial^2 \psi_{j}^{(\ell)}(x)}{\partial x^2}
T_{\mathcal F}
\succeq -\gamma_{\ell,j} I_{d_{\mathcal F}},
\qquad \forall x\in\mathcal F, 
\label{eq:th7_input_soc_face}\\
& \tau_{\ell,j}\le (g_u)_j.
\label{eq:th7_input_margin}
\end{align}
Then, for each face $\mathcal F$, the convexified scalar function
$\psi_{j}^{(\ell)}(x)+\tfrac12\gamma_{\ell,j}\|x\|_2^2$ is convex along $\mathcal F$,
and its maximum over $\mathcal F$ is attained at $\mathrm{vert}(\mathcal F)$.
Consequently,
$\psi_{j}^{(\ell)}(x)\le (g_u)_j,
\qquad \forall x\in\mathcal F,
$ and by convexity of $\mathcal U$, one has for all $x\in\cal S(F,g)$
\begin{align}
F_u u(x)
=
\sum_{\ell}\theta_\ell(x)\,F_u u_\ell(x)
\le
\sum_{\ell}\theta_\ell(x)\,g_u
= g_u.    
\end{align}
Hence, the face-supported interpolated controller \eqref{thm:th7_face_restricted}
satisfies the input constraints everywhere on the safe set.}
\end{remark}

\subsection{Systems with Additive Disturbances}
We now develop a disturbance-aware version of Theorem~2 and discuss how it extends
Proposition~1 and Theorem~3. \vspace{3pt}


\begin{thm}
\label{thm:hybrid_dc_disturbance_in_convexity}
Consider the system \eqref{system} under additive disturbances and let Assumptions~1--5 hold.
Define $q_\ell=Q(x_\ell)$ for all ${\ell} \in \mathcal I_{\ell}$.
Introduce decision variables
\(
P\in\mathbb R^{s\times s},\;
t\in\mathbb R^{s},\;
\varepsilon\in\mathbb R^s,\;
G_{K,1},G_{K,2},
\)
and epigraph variables
\(
\{\xi_{\ell}^{(1)},\xi_{\ell}^{(2)}\}_{\ell=1}^{\ell}\subset\mathbb R_{\ge0}.
\)
Consider the convex optimization
\begin{subequations}
\label{eq:robust_hybrid_dc_opt_in_convexity}
\begin{alignat}{2}
& \min_{G_{K,1},G_{K,2},P,t,\varepsilon,\{\xi_{\ell}^{(1)},\xi_{\ell}^{(2)}\}}
 \| \varepsilon\|_1
\label{eq:robust_obj_in_convexity}\\
& 
 Pg+t \le \lambda g,
\label{eq:robust_margin_in_convexity}\\&
 PF = F X_1G_{K,1},
\label{eq:robust_PF_in_convexity}\\&
 \|G_{K,1}x_\ell\|_2 \le \xi_{\ell}^{(1)},\quad
\|G_{K,2}q_\ell\|_2 \le \xi_{\ell}^{(2)},
\qquad \forall \ell \in \mathcal I_{\ell},
\label{eq:robust_epi_norms}\\&
 F X_1G_{K,2} \, q_\ell
+\tfrac12\,\varepsilon\odot\|x_\ell\|_2^2
+d_{T,\ell}
\le t,
\quad \forall \ell \in \mathcal I_{\ell},
\label{eq:robust_vertex_epi_in_convexity}\\&
 \frac{\partial^2 H_i(x)}{\partial x^2}
\succeq -\varepsilon_i I,
\, \forall x\in\cal S(F,g),\;  \forall i \in\mathcal I_{s} ,
\label{eq:robust_hess_in_convexity}\\&
 V_0 [G_{K,1}\;\;G_{K,2}] = I,\quad \\&
P\ge 0,\quad \varepsilon\ge 0, \quad \xi_{\ell}^{(1)} \ge 0, \xi_{\ell}^{(2)} \ge0, \,  \forall \ell \in \mathcal I_{\ell},
\label{eq:robust_adm_in_convexity}
\end{alignat}
\end{subequations}
where $d_{T,\ell}=T \, h_w F_n\big(\xi_{\ell}^{(1)}+\xi_{\ell}^{(2)}+1\big)$ with $F_n=\big[\|F_{1,:}\|_1,\ldots,\|F_{s,:}\|_1\big]^\top$.  
Then, the gains $K_1=U_0G_{K,1}$ and $K_2=U_0G_{K,2}$ returned by
\eqref{eq:robust_hybrid_dc_opt_in_convexity} render the safe set
$\lambda$-contractive.
\end{thm}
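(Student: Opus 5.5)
The plan is to reduce the disturbed case to the disturbance-free argument of Theorem~2 and to absorb the extra terms into the vertex epigraph constraint \eqref{eq:robust_vertex_epi_in_convexity} through $d_{T,\ell}$.

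\textbf{Step 1: split the facet map.} By Lemma~1, the true closed loop is
\[
x(t{+}1)=(X_1-W_0)G_{K,1}x+(X_1-W_0)G_{K,2}Q(x)+w .
\]
Hence, for each $i\in\mathcal I_s$,
\[
F_{i,:}x(t{+}1)=H_i(x)-F_{i,:}W_0\big(G_{K,1}x+G_{K,2}Q(x)\big)+F_{i,:}w ,
\]
where $H_i$ is the nominal facet map \eqref{Hf}. This is the same map that appears in \eqref{eq:robust_hess_in_convexity}, so it can be treated exactly as in Theorem~2. The remaining terms depend on the unknown $W_0$ and $w$, and I will bound them using only $h_w$.

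\textbf{Step 2: bound the disturbance terms.} I will use H\"older's inequality. Since every entry of $W_0$ is bounded by $h_w$, for any $v\in\mathbb R^T$
\[
|F_{i,:}W_0v|\le \|F_{i,:}\|_1\|W_0v\|\le \|F_{i,:}\|_1\,h_w\|v\|_1\le \|F_{i,:}\|_1\,h_w\sqrt T\|v\|_2\le T h_w\|F_{i,:}\|_1\|v\|_2 .
\]
Similarly, $F_{i,:}w\le h_w\|F_{i,:}\|_1\le Th_w\|F_{i,:}\|_1$ because $T\ge1$. Together these give the envelope
\[
D_i(x)=Th_w\|F_{i,:}\|_1\big(\|G_{K,1}x\|_2+\|G_{K,2}Q(x)\|_2+1\big).
\]
The envelope holds uniformly in $W_0$ and $w$, which is exactly the structure of $d_{T,\ell}$ once the two norms are replaced by the epigraph variables $\xi^{(1)}_\ell,\xi^{(2)}_\ell$ from \eqref{eq:robust_epi_norms}.

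\textbf{Step 3: handle the nominal part at a vertex.} For each $x\in\mathcal S(F,g)$, I write $x=\sum_\ell\theta_\ell x_\ell$ with $\theta_\ell\ge0$ and $\sum_\ell\theta_\ell=1$. I then repeat the Theorem~2 argument. By \eqref{eq:robust_hess_in_convexity}, the function $\tilde H_i=H_i+\tfrac{\varepsilon_i}{2}\|x\|_2^2$ is convex. Its linear part is handled through the matching \eqref{eq:robust_PF_in_convexity}:
\[
F_{i,:}X_1G_{K,1}x=P_{i,:}Fx\le P_{i,:}g ,
\]
using $P\ge0$. The nonlinear part is bounded at vertices by $F_{i,:}X_1G_{K,2}q_\ell+\tfrac12\varepsilon_i\|x_\ell\|_2^2$.

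\textbf{Step 4: handle the disturbance envelope at vertices.} The same vertex reduction must work for $D_i$. The term $\|G_{K,1}x\|_2$ is a norm of an affine map, hence convex, so
\[
\|G_{K,1}x\|_2\le\sum_\ell\theta_\ell\|G_{K,1}x_\ell\|_2\le\sum_\ell\theta_\ell\xi^{(1)}_\ell .
\]
The term $\|G_{K,2}Q(x)\|_2$ is the main obstacle, because $Q$ is nonlinear and this term is not covered by the Hessian condition \eqref{eq:robust_hess_in_convexity}.

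My first attempt is to bound it by its vertex values through the same convex combination, i.e.\ to show $\|G_{K,2}Q(x)\|_2\le\sum_\ell\theta_\ell\xi^{(2)}_\ell$. I would justify this either by applying Lemma~2 to the componentwise DC structure of $G_{K,2}Q$, or by appealing to whatever convexification of $Q$ the Theorem~2 argument already provides. If neither route works for a general $Q$, I would add this as an explicit requirement or note where the argument needs it.

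\textbf{Step 5: assemble.} Summing the bounds with weights $\theta_\ell$ and using \eqref{eq:robust_vertex_epi_in_convexity} gives
\[
\sup_{W_0,w}F_{i,:}x(t{+}1)\le P_{i,:}g+t_i\le\lambda g_i
\]
by \eqref{eq:robust_margin_in_convexity}. This holds for every $i\in\mathcal I_s$ and every admissible disturbance, so $\mathcal S(F,g)$ is $\lambda$-contractive.
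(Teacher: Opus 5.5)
Your Steps~1--3 and~5 follow the paper's proof essentially line by line: the same envelope $Th_w\|F_{i,:}\|_1(\|G_{K,1}x\|_2+\|G_{K,2}Q(x)\|_2+1)$ for the $W_0$- and $w$-terms, the same $\varepsilon_i$-convexification of $H_i$, and the same use of $PF=FX_1G_{K,1}$, $P\ge0$ and the vertex epigraph constraint. The genuine gap is Step~4, which you leave open. It cannot be closed from the stated constraints, because the inequality $\|G_{K,2}Q(x)\|_2\le\sum_\ell\theta_\ell\xi^{(2)}_\ell$ is false in general.

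Take $n=N=1$, $S(x)=x^2-x^4$ (so $A_s=0$ and $Q=S$), and $\mathcal S(F,g)=[-1,1]$. Both vertex values $q_\ell=Q(\pm1)$ vanish, so \eqref{eq:robust_epi_norms} admits $\xi^{(2)}_\ell=0$. Yet at $x=\pm1/\sqrt2$ one has $\|G_{K,2}Q(x)\|_2=\tfrac14\|G_{K,2}\|_2>0$, and $G_{K,2}\neq0$ because $V_0G_{K,2}$ must equal the last column of $I$.

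Your two candidate routes do not rescue this. Lemma~2 majorizes scalar linear functionals $c^\top H(x)$, not a norm. Even if you write the norm as a supremum of such functionals, the majorant built at $x_\ell$ generally overestimates the function at the other vertices $x_{\ell'}$ by a linearization error, so it is not controlled by $\xi^{(2)}_{\ell'}$. The curvature slack in \eqref{eq:robust_hess_in_convexity} constrains only the scalar $F_{i,:}X_1G_{K,2}Q(x)$ and says nothing about the $T$-vector $G_{K,2}Q(x)$.

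The paper's proof does not supply the missing idea either. It closes exactly this step by asserting that $r_i(x)=Th_w\|F_{i,:}\|_1(\|G_{K,1}x\|_2+\|G_{K,2}Q(x)\|_2+1)$ is convex (``a nonnegative weighted sum of norms''), so that $\widetilde H_i=c_i^\top x+h_i+r_i+\tfrac{\varepsilon_i}{2}\|x\|_2^2$ is maximized at a vertex. That holds for $\|G_{K,1}x\|_2$ but, by the example above, not for $\|G_{K,2}Q(x)\|_2$. So you have put your finger on the genuinely weak point of the argument.

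To obtain a proof you must make the missing ingredient explicit, in one of two ways:
\begin{enumerate}
\item Assume that the returned $G_{K,2}$ makes $x\mapsto\|G_{K,2}Q(x)\|_2$ convex on $\mathcal S(F,g)$. This is the hypothesis the paper uses implicitly.
\item Replace the vertexwise constraint $\|G_{K,2}q_\ell\|_2\le\xi^{(2)}_\ell$ by a uniform one, e.g.\ $M_Q\|G_{K,2}\|_2\le\xi^{(2)}$ with $M_Q=\max_{x\in\mathcal S(F,g)}\|Q(x)\|_2$. This is still convex in $G_{K,2}$. With $d_{T,\ell}$ built from $\xi^{(1)}_\ell+\xi^{(2)}+1$, it yields $D_i(x)\le\sum_\ell\theta_\ell(d_{T,\ell})_i$ for all $x\in\mathcal S(F,g)$, and your Step~5 then goes through unchanged.
\end{enumerate}
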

 \noindent \textit{Proof:} See Appendix. \hfill $\blacksquare$
\vspace{3pt}

\begin{remark}
\textcolor{blue}{A purely additive robustification, as in Theorem~1, bounds the worst-case effect of
disturbances independently of the geometry of the safe set. In contrast,
Theorem~\ref{thm:hybrid_dc_disturbance_in_convexity} incorporates the disturbance
effects directly into the convexified vertex epigraph constraints through the
convex terms $r_i(x)$, which are optimized jointly with $G_{K,1}$ and $G_{K,2}$.
As a result, disturbance sensitivity is reduced {within} the same
geometry-aware verification mechanism as Theorem~2, rather than being compensated
by a uniform external margin. This typically yields less conservative
certificates, especially for anisotropic polytopes where the vertex norms and the
induced quantities $\|G_{K,1}x_\ell\|_2$ vary significantly across vertices.}
\end{remark}

\vspace{3pt}

 The results can be readily extended to the geometry-aware constructions of
Proposition~1 and Theorem~3, as well as their hybrid combination, by incorporating the same disturbance-induced slack.
Since these extensions follow directly and introduce no new technical challenges,
they are omitted for brevity.

\subsection{Discussion}
\textcolor{blue}{
Theorems~1--3 and Proposition~1 form a menu of convex safety certificates with complementary tradeoffs.
Proposition~1 strictly improves upon Theorem~1 and can be progressively refined by integration with the vertex-based construction of Theorem~3 to further reduce conservatism.
In contrast, the conservatism of Proposition~1 relative to Theorem~2 depends on the structure of the nonlinearities. A priori analysis of the nonlinear basis functions (e.g., curvature or Lipschitz properties) can therefore guide the choice between geometry-Lipschitz certificates (Proposition~1 and its Theorem~3-based variants) and pure geometry- based certificates (Theorems~2 and~3) under a given computational budget. 
The only hyperparameter is the contractivity parameter, which trades off safe-set size and convergence rate.
}

\vspace{6pt}
 \subsubsection{Feasibility}
 \textcolor{blue}{Infeasibility can arise either from fundamental limitations of the control problem (structural infeasibility) or from conservatism in the safety certificate. While the former cannot be remedied algorithmically, a key contribution of this paper is to address the latter.  If a theorem is infeasible for a prescribed polyhedral set due to structural or algorithmic limitations, feasibility can be recovered by adding a facet-wise scaling step.
Define \(r\in\mathbb{R}^s_{+}\) and the scaled set
\[
\cal S(r)=\{x\in\mathbb{R}^n:\;Fx\le\operatorname{diag}(r)\,g\}.
\]
Since the contractivity conditions are enforced facet-wise and depend linearly on \(g\), all results extend by replacing \(g\) with \(\operatorname{diag}(r)g\).
Thus, \(r\) can be optimized to identify a feasible invariant set while preserving the geometry of the original safe set and reducing conservatism across state directions.}

\vspace{6pt}
\subsubsection{Computational complexity}
\textcolor{blue}{Reducing conservatism in safe control design via Theorems~2--3, Proposition~1,
and their disturbance-aware counterparts, incurs
additional computational cost relative to Theorem~1. 
The main additional computational cost of Theorem~2 stems from vertex-dependent dual multipliers \(P_\ell\) enforcing dual matching conditions.
This vertex–facet scaling is not intrinsic: dualizing prior to vertex enumeration allows a single multiplier \(P\) shared across all vertices, yielding complexity that scales only with the number of facets, as in Proposition~1 and Theorem~3.
Theorem~2 deliberately retains vertex-dependent multipliers to exploit vertex-wise curvature information, which constitutes its primary computational overhead.
In contrast, Theorem~3 solves multiple smaller optimization problems in parallel, each with lower complexity than that of Theorem~2. 
Despite the added cost of these results compared to Theorem 1, the proposed approaches represent a moderate and
tunable computational overhead in exchange for a substantial reduction in
conservatism, a trade-off that is well justified in safety-critical control
applications for the following reasons:}

\smallskip
\noindent \textcolor{blue}{\textbf{Facet- versus vertex-scaling.}
In the proposed framework, second-order (curvature or DC) constraints are indexed by the \emph{facets} of the safe set, i.e., by $i\in\mathcal I_s$, rather than by its vertices. Accordingly, the number of such constraints scales with the number of facets $s$ in Theorem~2, reduces to $\tfrac{s}{2}$ for sign-symmetric sets in Proposition~1, and is further restricted to the \emph{active} facets in Theorem~3. 
For example, for the commonly used hyper-rectangular safe set
\(
\cal S=\{x\in\mathbb R^n \mid -r\le x \le r\},
\) the number of facets is $s=2n$, and the number of active faces associated with any vertex for any simple polytope is $n$. As a result, the number of curvature-based constraints remains modest even in higher-dimensional settings.}

\smallskip
\noindent \textcolor{blue}{\textbf{Vertex constraints as linear inequalities.}
The vertex-based conditions appearing in Proposition 1 and Theorems~2--4 enter the synthesis as
linear (or affine) inequalities in the decision variables once the vertices are fixed. Consequently, even when $\ell$ is large, these constraints can often be handled efficiently using constraint-generation or active-set strategies, in which only violated vertex constraints are added iteratively.}

\smallskip

\noindent \textcolor{blue}{\noindent \textbf{Mode-dependent Gains:} 
The proposed face-restricted synthesis does not result in a single large optimization problem.
Instead, it decomposes the controller synthesis into a collection of convex programs indexed by
the vertices of the safe polytope.
Each vertex-indexed problem involves only local decision variables and each problem enforces constraints only along the tangent subspaces of faces incident to that vertex, leading to considerably lower per-problem complexity compared to Theorem 2.
These facet-based optimizations are independent and can be executed in parallel, so the
overall cost is governed by the underlying
vertex complexity of $\cal S(F,g)$, rather than by an exponential growth in optimization variables. Another source of computational complexity is the online computation of the convex coefficients in \eqref{eq:th3_policy_Pm}, which requires expressing the current
state as a convex combination of the polytope vertices.
This step involves solving a small linear program whose size depends only on the
state dimension and the number of vertices, and can often be implemented
efficiently in real time.
Closed-form solutions for computing such convex combinations in vertex-based control are also available, as discussed in \cite{convexinterpol, SetB}.}

\section{Simulation Results}
\label{sec:sim_results}

\textcolor{blue}{This section evaluates the proposed approaches and compares their performance with Lipschitz-based bounding approaches. All optimization problems are solved using \textsc{YALMIP} with the \textsc{MOSEK}
solver. Simulations were conducted on a Dell XPS~13 laptop with an Intel Core~i7-1250U CPU and 16~GB of RAM running Windows~11~Pro.}

\subsection{System Description and Data Collection}
\textcolor{blue}{\paragraph{System Description} We consider the DT-NS
\begin{equation}
x(t+1)=A_1x(t)+A_2 Q(x(t))+Bu(t)+w(t),
\label{eq:sim_system}
\end{equation}
where $x(t)\in\mathbb{R}^3$ and $u(t)\in\mathbb{R}$, with
\[
A_1=
\begin{bmatrix}
0.90 & 0.02 & 0 \\
-0.3 & 0.85 & 0.01 \\
0.05 & 0 & 0.80
\end{bmatrix},
\quad
B=
\begin{bmatrix}
0 \\ 0.1 \\ 0
\end{bmatrix},
\]
and nonlinear basis functions and their coefficient matrix as
\[
Q(x)=
\begin{bmatrix}
x_1^3 & x_2^3 & x_3^3 & x_1^2
\end{bmatrix}^\top,
\]
\[
A_2 \;=\;
\begin{bmatrix}
e_1 & 0      & 0      & 0 \\
0      & -0.2  & 0      & 0 \\
0      & -0.008 & e_2 & -0.05
\end{bmatrix}.
\]
Throughout all experiments, safety is certified with respect to the box-shaped invariant set
\[
\mathcal S(F,g)=\{x\in\mathbb{R}^3 \mid |x_i|\le r,\ i=1,2,3\},
\]
and performance is evaluated both in terms of the admissible nonlinearity level and the maximum certifiable radius $r$ for fixed nonlinear coefficients.
To assess nonlinearity tolerance, we conduct two parametric studies: an \textit{$e_1$-sweep} with $e_2=-0.05$ fixed and an \textit{$e_2$-sweep} with $e_1=-0.01$ fixed, each identifying the largest admissible coefficient for which feasibility holds at $r=0.5$. To evaluate the size of the certified safe set, we perform an \textit{$r$-maximization} study by fixing $e_1=-0.01$ and $e_2=-0.005$ and maximizing the invariant set radius $r$.}

\textcolor{blue}{\paragraph{Data Collection}
Controller synthesis is based on a single open-loop experiment of length \(T=20\), with input sequence
\(U_0=[u(0)\;\cdots\;u(19)]\) applied to the unknown system.
No knowledge of the system matrices \((A_1,B,A_2)\) is used during controller design; the controller is synthesized solely from measured input–state data.
The resulting state measurements \(x(t)\in\mathbb{R}^3\) are collected into
\(X_0=[x(0)\;\cdots\;x(19)]\) and \(X_1=[x(1)\;\cdots\;x(20)]\).
The data satisfy the rank condition in Assumption~5, which is verified numerically in all simulations.
The matrices \((A_1,B,A_2)\) are used only for data generation and closed-loop simulation, which is standard practice in data-driven control studies.}

\subsection{Lipschitz-Based Formulation (Theorem~1)}
\textcolor{blue}{Using Theorem 1 with no disturbance, for the $e_1$-sweep, $e_2$-sweep, and $r$-maximization experiments, we obtain $e_1=-0.04$,  $e_2=-0.015$, and $r=0.605$, respectively.
The corresponding control gains learned for the $r$-maximization scenario are \((K_1,K_2)=\begin{bmatrix}0.416 & -1.35 & 0.035 & 0 & 0.2 & 0 & 0\end{bmatrix}\).
The associated optimization problem for a specific value of $r$ is solved in $0.1457$~s. Figure~\ref{fig:thm1r} illustrates the closed-loop evolution under Theorem~1 (Lipschitz-only) certificate.
Trajectories are simulated from the safe-set vertices and show how the certified controller drives the state toward the interior while maintaining safety in simulation.}

\begin{figure}
    \centering
    \vspace{-0.8\baselineskip}
    \includegraphics[width=0.80\linewidth]{\detokenize{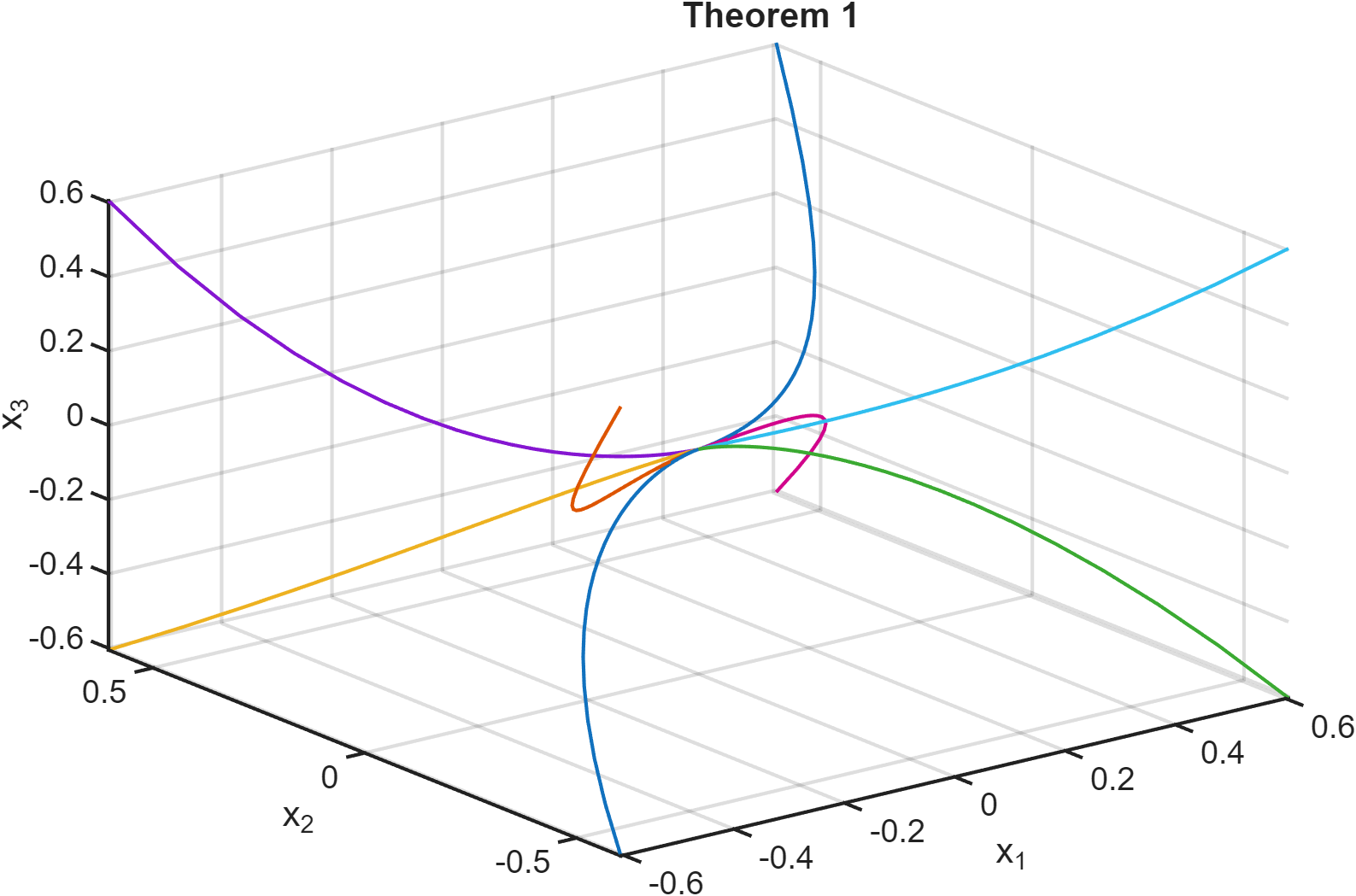}}
    \caption{Theorem~1: closed-loop state trajectories in $(x_1,x_2,x_3)$ from box-vertex initial conditions.}
    \label{fig:thm1r}
    \vspace{-0.8\baselineskip}
\end{figure}


\subsection{Results of Proposed Approaches}
\textcolor{blue}{The $e_1$-sweep, $e_2$-sweep, and $r$-maximization studies for Proposition~1 yield the same results as Theorem~1, while Theorem~2 performs worse in this example. This setting is intentionally chosen to highlight that, although global convexity cannot be enforced under Theorem~2 due to curvature sign changes across the safe set, the proposed framework can still reduce conservatism by exploiting geometric structure. In particular, partitioning the safe set into sign-consistent regions allows curvature slack to be introduced only locally, rather than globally. We further show that a special case of Theorem~3 admits a single global controller that effectively captures the benefits of such partitioning without explicitly introducing multiple regions. Consequently, we focus on this formulation to improve upon Theorem~2 and Proposition~1 while retaining a single-controller synthesis.}

\begin{figure}
    \centering
    \vspace{-0.1\baselineskip}
    \includegraphics[width=0.85\linewidth,
  height=0.80\linewidth,
  keepaspectratio]{\detokenize{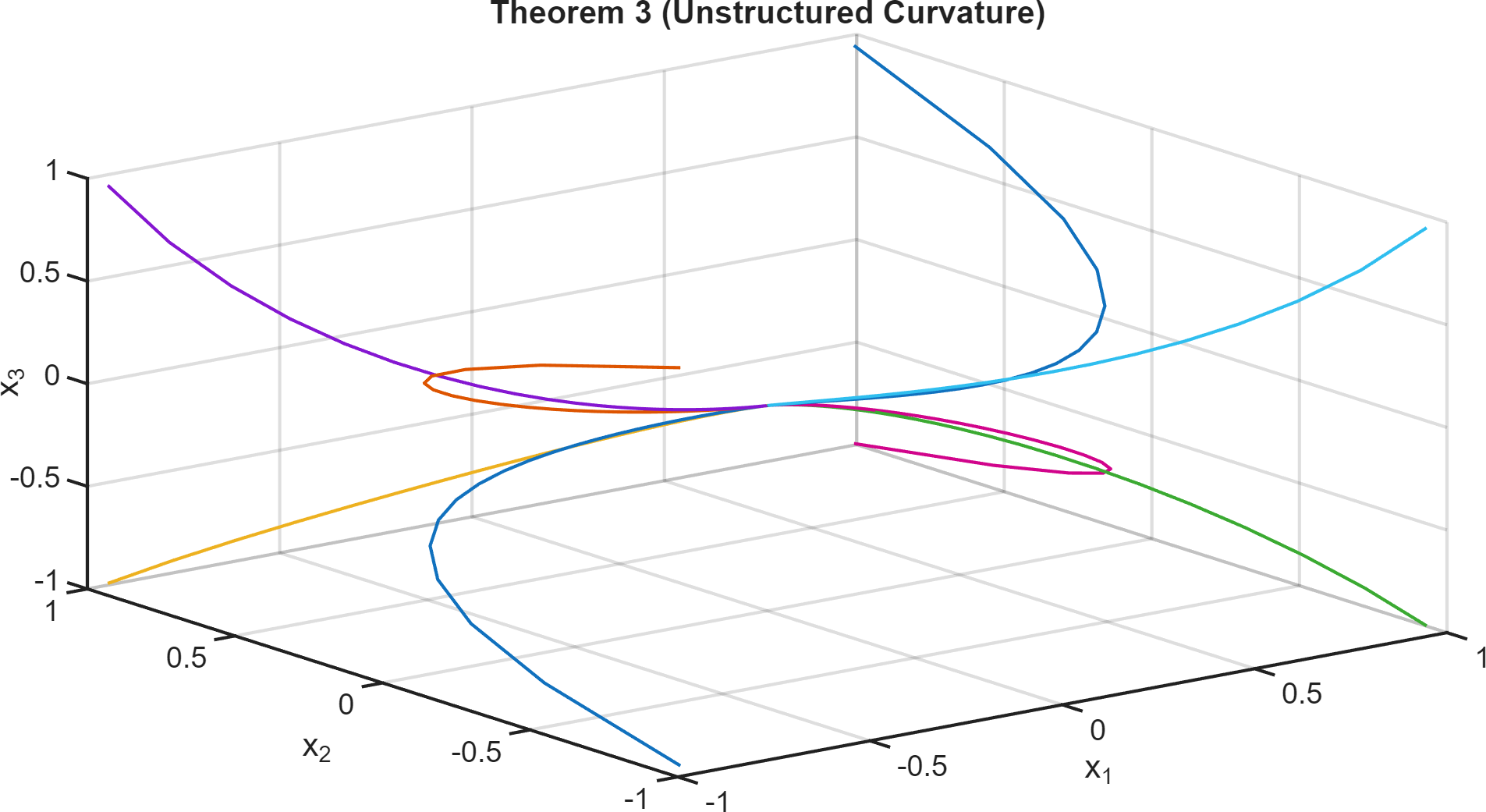}}
    \caption{Theorem~3 (unstructured curvature): closed-loop trajectories in $(x_1,x_2,x_3)$ from box-vertex initial conditions.}
    \label{fig:thm3unstructr}
    \vspace{-0.8\baselineskip}
\end{figure}

\begin{figure}
    \centering
    \vspace{-0.1\baselineskip}
    \includegraphics[width=0.85\linewidth,
  height=0.85\linewidth,
  keepaspectratio]{\detokenize{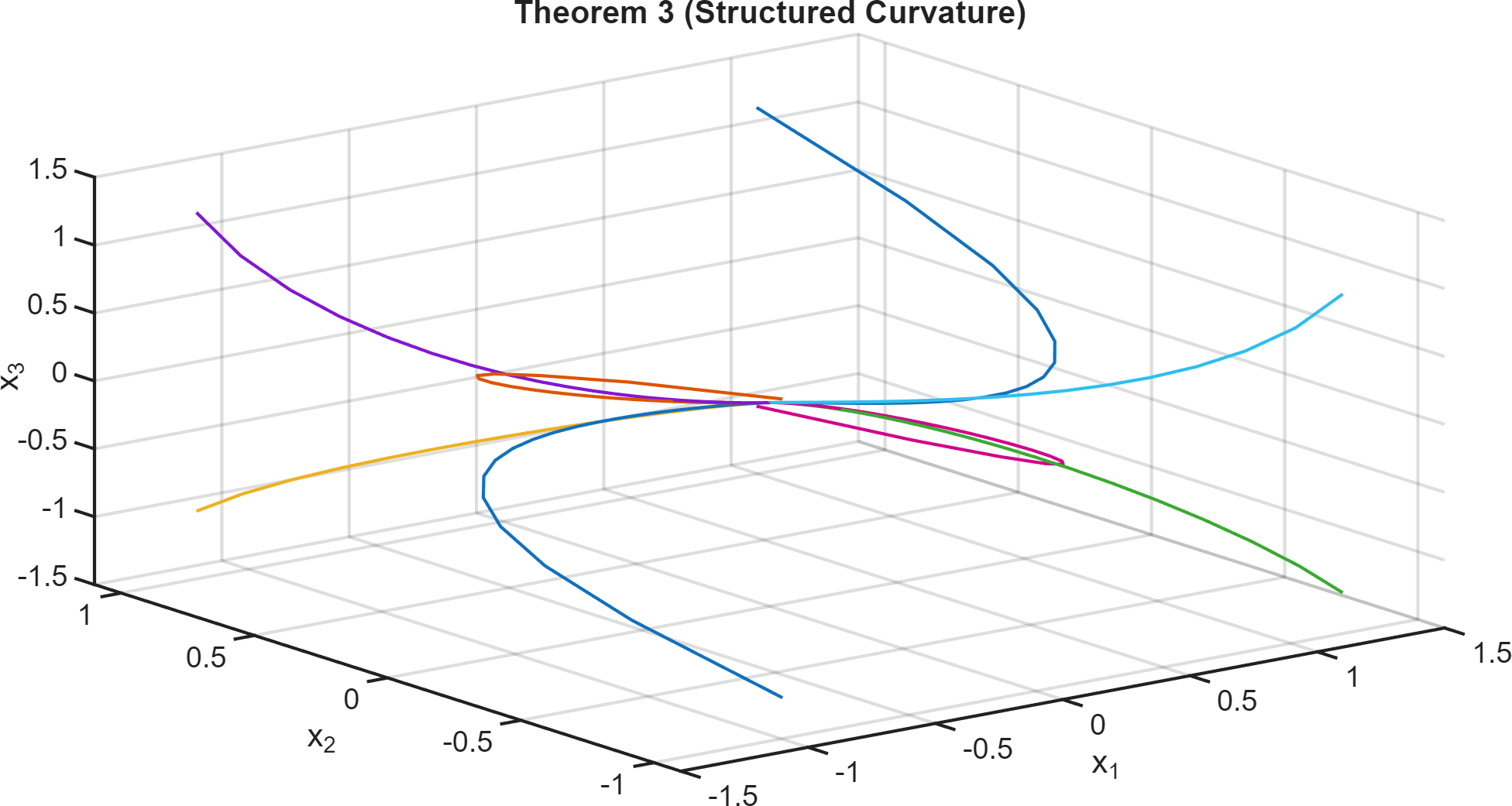}}
    \caption{Theorem~3 (structured curvature): closed-loop trajectories in $(x_1,x_2,x_3)$ from box-vertex initial conditions.}
    \label{fig:thm3structr}
    \vspace{-0.8\baselineskip}
\end{figure}


\begin{figure}
    \centering
    \vspace{-0.02\baselineskip}
   \includegraphics[
  width=0.80\linewidth,
  height=0.15\textheight
]{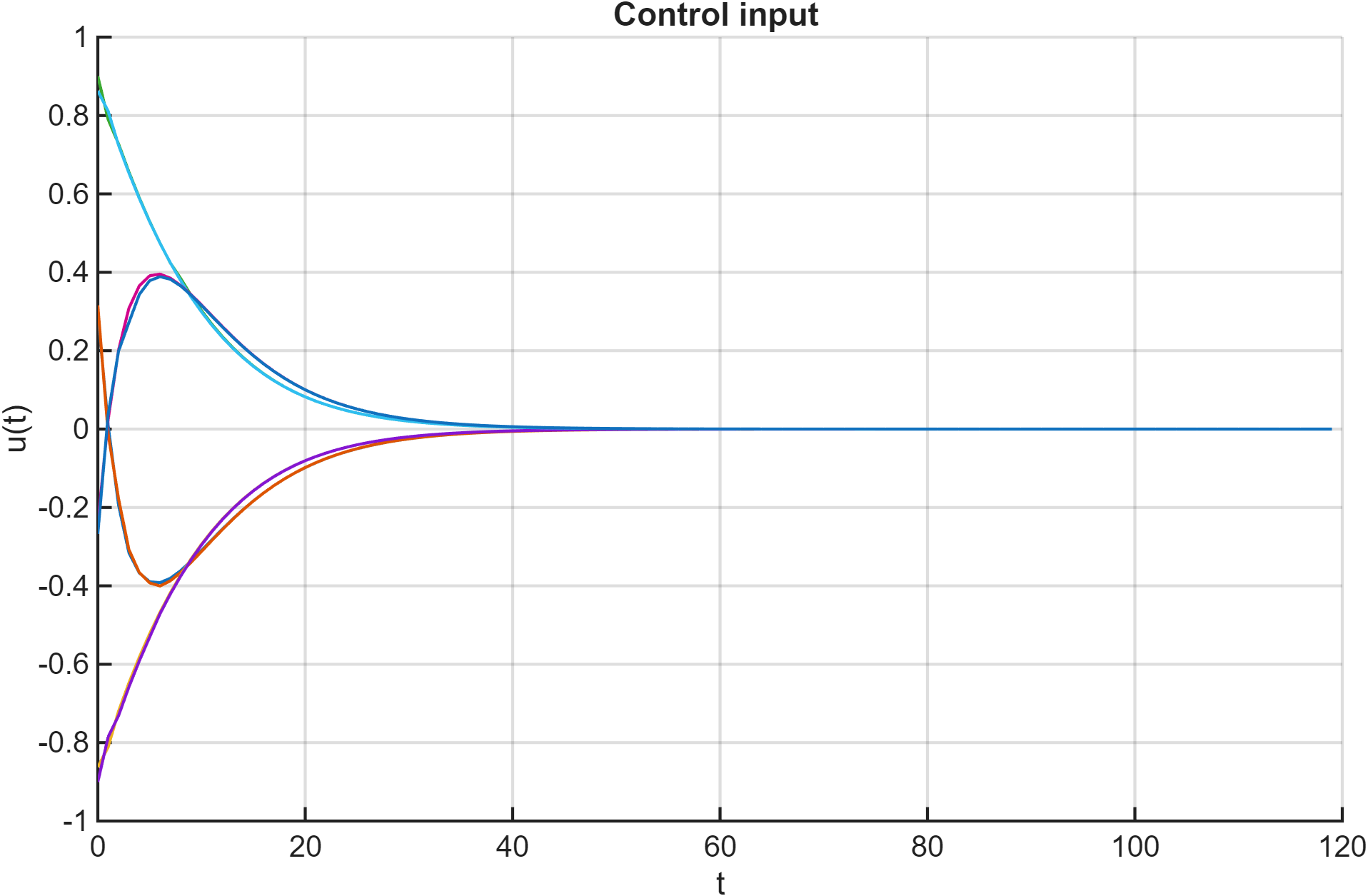}
    \caption{Control input trajectories $u(t)$ for the vertex-initialized simulations.}
    \label{fig:inputr}
    \vspace{-0.8\baselineskip}
\end{figure}

\begin{figure}
    \centering
    \vspace{-0.1\baselineskip}
    \includegraphics[width=0.85\linewidth,
  height=0.85\linewidth,
  keepaspectratio]{\detokenize{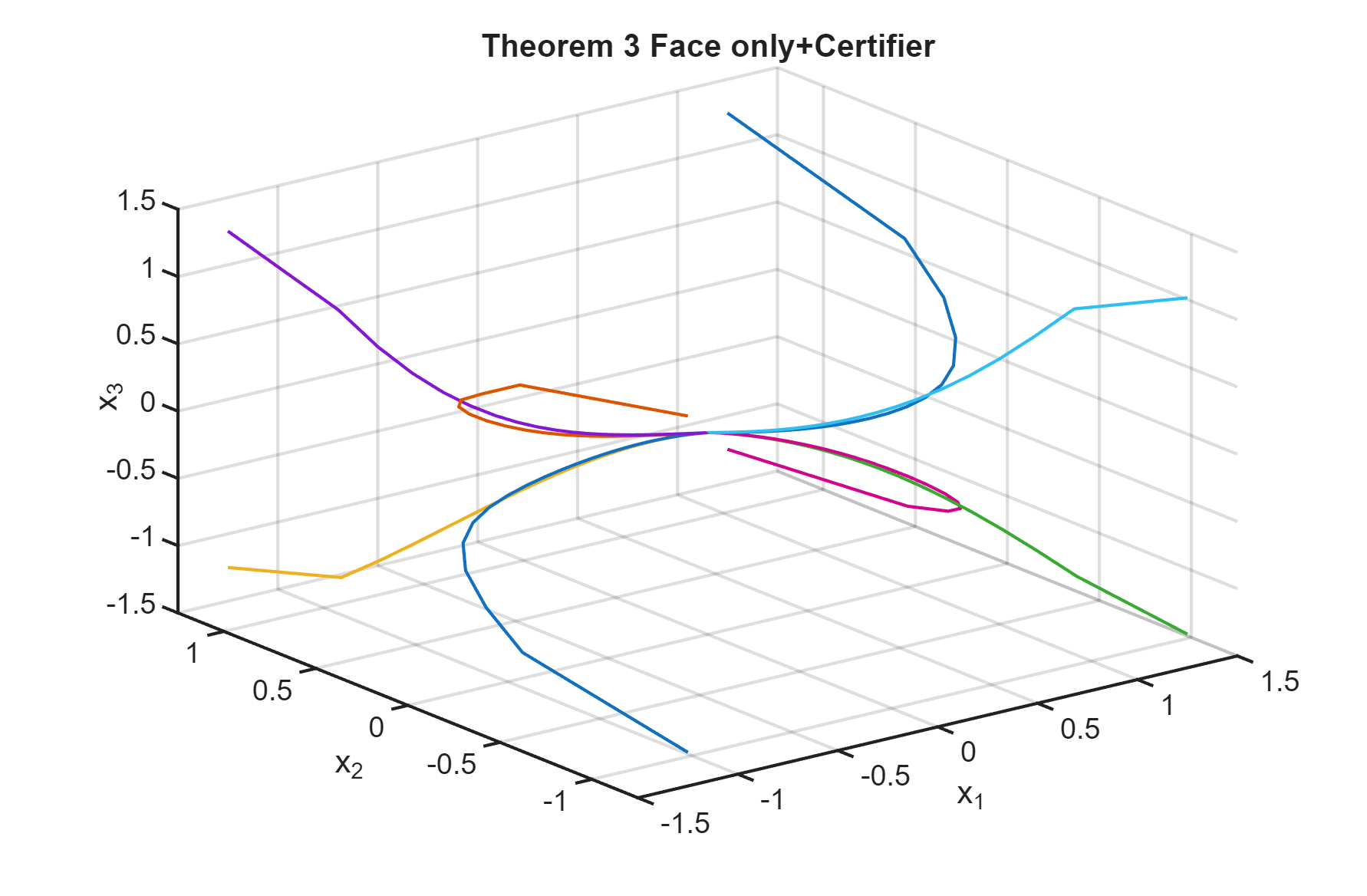}}
    \caption{Theorem~3 (Face-only with certification): closed-loop trajectories in $(x_1,x_2,x_3)$ from box-vertex initial conditions.}
    \label{fig:face}
    \vspace{-0.8\baselineskip}
\end{figure}

\textcolor{blue}{Therefore, we evaluate Theorem~3 under two settings.
In the unstructured case, the curvature slack \(\varepsilon\) is left free.
For the symmetric box \(\cal{S}=\{x:\,|x_i|\le r, \, i=1,2,3\}\), all vertex-wise problems are identical up to sign permutations, since the incident-facet structure and the facet-wise epigraph terms are invariant across vertices.
As the objective depends only on \(\|\varepsilon\|_1\), both the feasible set and cost are symmetry-invariant, yielding identical optimal gains at all vertices. Thus, a single optimization suffices. We then impose a {structured curvature slack} by enforcing
\(\varepsilon_i = 0\) for all facets \(i\) that are active at a given vertex
(or, equivalently, on the minimal face associated with that vertex). Consequently, no curvature compensation is permitted along directions tangent to active facets, while slack is introduced only for inactive facets where curvature sign changes may occur. This facet-selective relaxation yields vertex-dependent controller gains and substantially reduces conservatism, resulting in improved admissible set sizes and robustness margins. In both cases, the input constraint $|u|\le 1$ is imposed using Remark 9. \\
\indent For the unstructured version of Theorem~3, the $e_1$-sweep, $e_2$-sweep and $r$-maximum values are  \(e_1 = -0.24\), \(e_2 = -0.31\), and $r=0.97$, respectively.
The learned feedback gains are given by \( (K_1,K_2)=[\begin{matrix}
   0.28 & -1.73 & -0.032 0 & 1.97 & 0 & 0 
\end{matrix}]\).
For the structured version of Theorem~3 with vertex-dependent gains, the curvature slack corresponding to the active facets at each vertex is forced to zero, and $4$ independent feedback gains are learned for each (non-redundant) vertex. This additional structure leads to improved performance in terms of admissible set size and robustness margins. We obtain the $e_1$-sweep, $e_2$-sweep, and $r$-maximization, respectively, as 
$e_1=-0.25,  e_2=-0.51$ and $r=1.15$. One can observe that a significant improvement is achieved via Theorem 3, compared to Theorem 1, in both the admissible nonlinearity magnitude and the size of the certified invariant set. 
In the unstructured setting of Theorem~3, relative to Theorem~1, the $e_1$-sweep shows a sixfold increase in the admissible nonlinearity magnitude ($|e_1|: 0.04 \rightarrow 0.24$), while the $e_2$-sweep yields an improvement of approximately $20.7\times$ ($|e_2|: 0.015 \rightarrow 0.31$). 
For the $r$-maximization experiment with fixed $e_1=-0.01$ and $e_2=-0.005$, the certified invariant set radius increases by a factor of $1.6\times$ ($r: 0.605 \rightarrow 0.97$). In the structured setting of Theorem~3, these gains are further amplified. 
The $e_1$-sweep admits a $6.25\times$ increase ($|e_1|: 0.04 \rightarrow 0.25$), while the $e_2$-sweep improves by approximately $34\times$ ($|e_2|: 0.015 \rightarrow 0.51$). 
Moreover, the $r$-maximization experiment yields a $1.9\times$ enlargement of the invariant set ($r: 0.605 \rightarrow 1.15$).}


\textcolor{blue}{This comes at increased computational cost: the unstructured formulation solves in \(0.21\)~s, while the structured case is approximately \(4\times\) slower. Note that for the \(n=3\) box, symmetry reduces the number of distinct vertex optimizations from \(8\) to \(4\). 
Figure~\ref{fig:thm3unstructr} reports the trajectories obtained when curvature compensation is enforced in an unstructured way.
Figure~\ref{fig:thm3structr} reports the trajectories obtained when curvature compensation is enforced in a structured way via learning multiple gains. 
Figure~\ref{fig:inputr} shows the corresponding control effort over time for the same bundle of vertex-initialized closed-loop simulations. As can be seen, states and inputs do not violate their constraints. Only one input trajectory is shown to avoid repetition.}  

\begin{figure}
    \centering
    \vspace{-0.1\baselineskip}
    \includegraphics[width=0.8\linewidth,
  height=0.8\linewidth,
  keepaspectratio]{\detokenize{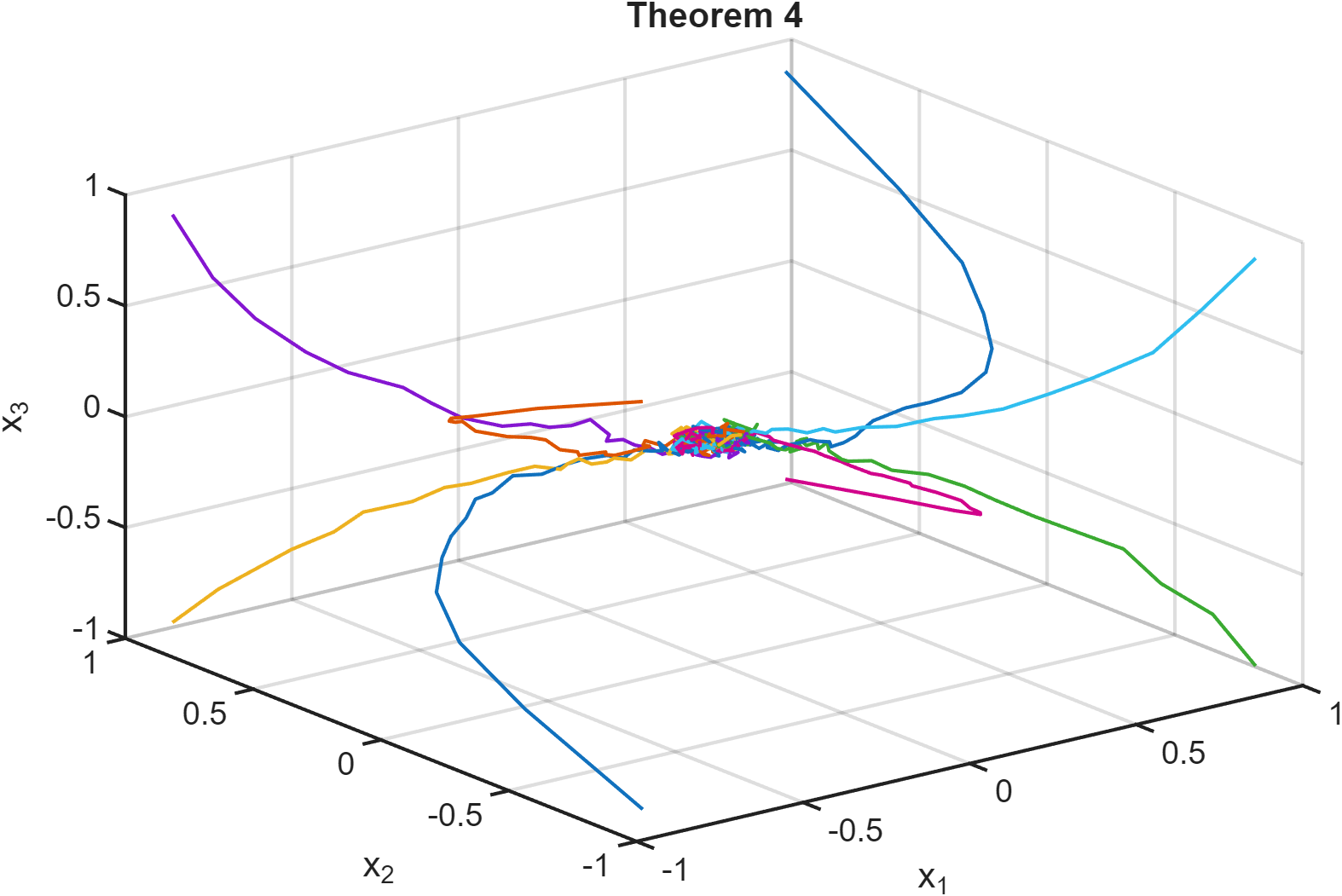}}
    \caption{Theorem~4-like approach (Disturbance-induced System): closed-loop trajectories in $(x_1,x_2,x_3)$ from box-vertex initial conditions.}
    \label{fig:face}
    \vspace{-0.8\baselineskip}
\end{figure}

\textcolor{blue}{The maximum performance is achieved when the curvature slack vector
\(\varepsilon\in\mathbb{R}^s_{\ge0}\) is introduced only on active facets in Theorem 3,
i.e., for \(i\in\mathcal I_f(\mathcal F)\) in~\eqref{eq:th7_epi_face}, rather than on all
\(i\in\mathcal I_s\).
Inactive facets then require no curvature compensation, significantly reducing conservatism compared to global curvature bounds. However, enforcing curvature conditions only on active facets does not guarantee invariance of the entire polyhedral set.
To address this, we augment the synthesis with a lightweight high-probability verification step based on Monte Carlo sampling over the safe set. After each optimization, samples are drawn with approximately \(70\%\) from the boundary \(\partial\cal S(F,g)\) and \(30\%\) from the interior, and contractivity is checked at all sampled points. If violations are detected, the set scaling parameter is reduced via bisection, and the synthesis is repeated. Using standard scenario-based arguments, this procedure certifies contractivity with probability at least \(99.9\%\). While this verification step increases computational cost, it enables substantially less conservative curvature relaxation and leads to improved closed-loop performance in practice. Using this synthesis–verification pipeline, the reported results correspond to $e_1$-sweep, $e_2$-sweep, and $r$-maximization values of $e_1=7$, $e_2=6.5$, and $r=1.25$, respectively.
For this case, the $e_1$-sweep admits a $175\times$ increase in the tolerable nonlinearity magnitude ($|e_1|: 0.04 \rightarrow 7$), while the $e_2$-sweep improves by approximately $433\times$ ($|e_2|: 0.015 \rightarrow 6.5$). Moreover, the $r$-maximization experiment yields a $2.07\times$ enlargement of the certified invariant set ($r: 0.605 \rightarrow 1.25$).
From a computational standpoint, each synthesis optimization requires approximately $0.165\,\mathrm{s}$ to solve, while the subsequent a posteriori scenario-based verification step with $N=2000$ samples requires an additional $0.08\,\mathrm{s}$. Each optimization and verification step may need to be repeated several times until a solution is verified to be safe. While this increases the computational cost, it enables a significant enlargement of the invariant set and can certify safety under substantially stronger nonlinearities. \\
\indent Finally, we introduce an additive disturbance of magnitude $h_w = 3\times 10^{-2}$ to evaluate
Theorem~4, which is implemented through a vertex-dependent realization of Theorem~3.
For simplicity, we consider only the case of a single (global) control gain.
Under this setting, the certified values are $e_1=-0.19$, $e_2=-0.26$, and $r=0.93$.
The corresponding control gains learned at this operating point are
\(
(K_1,K_2)=\begin{bmatrix}
0.28 & -1.82 & 0.024 & 0 & 0.194 & 0 & 0
\end{bmatrix}.
\)
 All algorithms exhibit sensitivity to the disturbance magnitude. This behavior arises because,
although the disturbance sequence $W_0$ is realized but unknown, the controllers are designed
using a worst-case bound that must hold uniformly over all possible realizations. This conservative
treatment leads to performance degradation as the disturbance size increases. Recent results
in~\cite{unify} can be leveraged to more accurately approximate the realized disturbance sequence,
thereby tightening the uncertainty bound and reducing conservatism. Figure~\ref{fig:face}
illustrates the simulation results obtained using this theorem, where the learned control gains
are applied to the system and trajectories are initialized at the vertices of the safe set.}



\textcolor{blue}{\noindent \textit{Future directions and reduction of conservatism:}
While the optimization–verification approach significantly improves nonlinearity tolerance and enlarges the certified invariant set, it incurs increased computational cost. Several directions can further reduce conservatism without requiring a verification step, or by significantly reducing the number of verification iterations.  First, the vertex-based framework naturally supports adaptive geometric refinement by introducing intermediate vertices or face splits in regions where curvature compensation is active, enabling local convexification without globally tightening the constraints. Second, incremental or iterative synthesis schemes can be developed in which an initial controller is obtained via Theorem~3, followed by targeted re-optimization on faces or regions identified as critical, for example, through a-posteriori verification, thereby progressively reducing the required slack. These directions suggest a systematic pathway toward tighter invariant sets and higher-performance controllers while preserving tractability.}

\section{Conclusion}
We proposed a data-driven safe control framework for nonlinear discrete-time systems with uncertainty and disturbances, which certifies invariance of a prescribed polytopic safe set without canceling nonlinearities or relying on worst-case bounds. The approach preserves convexity through a convexification–verification pipeline based on a control-oriented difference-of-convex representation of the closed-loop dynamics, enabling tractable synthesis and geometry-aware safety verification. Partitioned and vertex-dependent controller constructions were introduced to reduce conservatism and enlarge the class of certifiable invariant sets. \textcolor{blue}{Future work will focus on computing maximal invariant sets within prescribed templates, extending the framework to explicitly handle measurement noise and nonlinear basis function learning, and developing more efficient algorithms for set enlargement. Integrating basis function learning with the proposed approach will enable learning representations that satisfy the convexifiability conditions of Proposition~1, while bounding the remaining terms to preserve safety. These directions are orthogonal to the proposed design pipeline and will further improve applicability in practical settings, including integration with terminal set and terminal cost design for MPC.}

\section{Appendix}
\noindent \textit{Proof of Theorem 1:} 
Under \eqref{cor:cont3}, based on Lemma 1, the closed-loop dynamics \eqref{data-f} is  
\begin{align}
x(t+1)
&= X_1G_{K,1}x(t) + \delta(t), \label{cor:cl}
\end{align}
where 
\begin{align}\label{eq.DD_closed_loop_final_0}
&\delta(t)=X_1 G_{K,2}Q(x(t)) -W_0 \big(G_{K,1}x+G_{K,2}Q(x(t))\big) \nonumber \\ & \quad \quad + w(t),
\end{align}
collects the nonlinear remainder, the disturbance terms
induced by $W_0$ and $w(t)$. To formalize $\lambda$-contractivity, fix a facet $i\in \mathcal I_s$ and
consider the worst-case one-step evolution along this facet as $\gamma_i
= \max_{\substack{ x\in\cal S(F,g)\\ W_0\in\mathcal W_W}}
F_{i,:}\,x(t+1)$, which becomes
\begin{align}
 \gamma_i
&= \max_{\substack{ x\in\cal S(F,g)\\ W_0\in\mathcal W_W}}
\Big( F_{i,:}X_1G_{K,1}x + F_{i,:}\delta(t) \Big), \label{cor:gamma_def}
\end{align}
where
$\mathcal W_W=\{W\in\mathbb R^{n\times T}:\|W\|\le Th_w\}$. By Definition~2, $\lambda$-contractivity of $\cal S(F,g)$ is guaranteed if
\begin{align}
\gamma_i \le \lambda g_i, \qquad i \in \mathcal I_s. \label{cor:lambda_req}
\end{align}
Define $f_{1,i}(x)=F_{i,:}X_1G_{K,1}x$ and
$f_{2,i}(x,W_0)=F_{i,:}\delta(t)$. Using the basic inequality
\(
\max (f_{1,i}+f_{2,i}) \;\le\; \max f_{1,i} + \max f_{2,i},
\)
a sufficient condition for \eqref{cor:lambda_req} is
\begin{align}
\underbrace{\max_{x\in\cal S(F,g)} F_{i,:}X_1G_{K,1}x}_{=~\bar\gamma_i}
+
\underbrace{\max_{\substack{x\in\cal S(F,g)\\ W_0\in\mathcal W_W}}
F_{i,:}\delta(t)}_{=~\bar\ell^{dw}_i}
\;\le\; \lambda g_i. \label{cor:suff_split}
\end{align}
The term $\bar\ell^{dw}_i$ depends on $G_{K,2}$ through $\delta(t)$.
Following the disturbance-cancellation/minimization philosophy, we upper-bound
$\bar\ell^{dw}_i$ by a slack variable $\eta_i$ and minimize this bound. In
particular,
\begin{align}
& \bar\ell^{dw}_i \le\;
 M_x \, L_Q
 \,\|F_{i,:}X_1G_{K,2}\|_2 \nonumber\\
&+ T h_w\, M_x \,\|F_{i,:}\|_1
\big(\|G_{K,1}\|_2 + \|G_{K,2}\|_2\,L\big)
+ \|F_{i,:}\|_1\,h_w
\le \eta_i , \label{eta5}
\end{align}
and the vector $\eta=[\eta_1,\ldots,\eta_s]^\top$ is minimized via the objective
\eqref{eta}.
Hence, \eqref{cor:suff_split} is satisfied if
\begin{align}
\bar\gamma_i +\eta_i \le \lambda g_i. \label{cor:suff_final}
\end{align}

Next, note that $\bar\gamma_i$ is the optimal value of the linear program
$\bar\gamma_i = \max_x F_{i,:}X_1G_{K,1}x \;\text{s.t.}\; Fx \le g$, with the dual \cite{SetB}
\begin{subequations}
\begin{align}
\hat\gamma_i
&=\min_{\alpha_i}\; \alpha_i^\top g
\label{cor:dual1}\\
&\text{s.t.}\quad \alpha_i^\top F = F_{i,:}X_1G_{K,1}, \label{cor:dual2}\\
&\hspace{2.1cm}\alpha_i^\top \ge 0, \label{cor:dual3}
\end{align}
\end{subequations}
where $\alpha_i\in\mathbb R^s$ is the dual variable. By strong duality for
linear programs, $\bar\gamma_i=\hat\gamma_i$ on feasibility.  Now, stack the row vectors $\alpha_i^\top$ into 
$P_s
=
\begin{bmatrix}
\alpha_1 & \alpha_2 & \cdots & \alpha_s
\end{bmatrix}^{\!\top}
\in \mathbb R^{s\times s}.$
Then, \eqref{cor:dual3} implies $P_s\ge 0$, and \eqref{cor:dual2} for all
$i \in \mathcal I_s$ is equivalently written as
\begin{align}
P_sF = FX_1G_{K,1}. \label{cor:PsF}
\end{align}
Finally, using \eqref{cor:dual1}, the condition \eqref{cor:suff_final} for all facets $i$ becomes
\begin{align}
\alpha_i^\top g + \eta_i \le \lambda g_i,\qquad i \in \mathcal I_s,
\end{align}
which stacks to
\begin{align}
P_s g+\eta \le \lambda g. \label{cor:Psg}
\end{align}
Together with the data-consistency constraint $V_0 \, [G_{K,1} \quad G_{K,2}]=I$, \eqref{eta5}, \eqref{cor:PsF} and \eqref{cor:Psg}  yields
\eqref{cor:cont1}--\eqref{cor:cont4-1}. Therefore, if
\eqref{cor:cont1}--\eqref{cor:cont4-1} hold, then \eqref{cor:lambda_req} holds,
and the safe set $\cal S(F,g)$ is $\lambda$-contractive (hence robustly
invariant). \hfill $\blacksquare$ \vspace{3pt}

\noindent\textit{Proof of Theorem~2:}
By definition, satisfaction of $\lambda$-contractivity implies that
\begin{alignat}{2}
H(x)\le \lambda g,\qquad \forall x\in\cal S(F,g),
\label{eq:robust_goal_vec}
\end{alignat}
where $H(x)=[H_1(x) \cdots H_s(x)]^\top$ and $H_i(x)$, $i\in\mathcal I_s$, are defined in \eqref{Hf}.
Under \eqref{cor:cont4} and Lemma~1, \eqref{eq:robust_goal_vec} is equivalent to
\begin{align}
\max_{x\in\cal S(F,g)}
\big(c_i^\top x+h_i(x)\big)\le \lambda g_i,
\label{eq:facet_goal}
\end{align}
where $c_i$ and $h_i(x)$ are defined in \eqref{ci}.
Imposing the curvature condition \eqref{eq:eps_hess_matrix} ensures that
\begin{align}
\tilde h_i(x)
=h_i(x)+\frac{\varepsilon_i}{2}\|x\|_2^2
\label{eq:convexified_hi}
\end{align}
is convex on $\cal S(F,g)$.
Hence $h_i=\tilde h_i-\phi_i$ is a DC decomposition with
$\phi_i(x)=\frac{\varepsilon_i}{2}\|x\|_2^2$.
Then, \eqref{eq:facet_goal} becomes
\begin{align}
\max_{x\in\cal S(F,g)}
\Big(c_i^\top x+\tilde h_i(x)-\phi_i(x)\Big)\le \lambda g_i,
\label{eq:facet_goal1}
\end{align}
where $\tilde h_i(x)=h_i(x)+\phi_i(x)$ is convex on $\cal S(F,g)$.
Applying Lemma~2 to this decomposition at a vertex
$x_\ell\in \mathcal I_\ell$ yields, for all
$x\in\cal S(F,g)$,
\begin{align}
h_i(x)
\le
\tilde h_i(x)
-\varepsilon_i x_\ell^\top x
+\frac{\varepsilon_i}{2}\|x_\ell\|_2^2 .
\label{eq:Lemma2_majorizer}
\end{align}

Since $\tilde h_i(x)$ is convex, it attains its maximum over
$\cal S(F,g)$ at a vertex. Therefore, introducing an epigraph variable $t_i$ and
imposing \eqref{eq:epi_vertices_matrix}, or equivalently,
\begin{align}
\tilde h_i(x_\ell)
=
h_i(x_\ell)+\frac{1}{2}{\varepsilon_i}\|x_\ell\|_2^2
\le t_i,
\, \forall \ell\in\mathcal I_\ell,
\label{eq:ti_vertex}
\end{align}
guarantees $\tilde h_i(x)\le t_i$ for all $x\in\cal S(F,g)$. Combining \eqref{eq:Lemma2_majorizer} and \eqref{eq:ti_vertex} gives
\begin{align}
c_i^\top x+h_i(x)
\le
t_i
+
\big(c_i^\top-\varepsilon_i x_\ell^\top\big)x+\frac{1}{2}{\varepsilon_i}\|x_\ell\|_2^2, \, \forall x\in\cal S(F,g)
 .
\label{eq:pre_lp_bound}
\end{align}
Define the affine coefficient
\begin{align}
d_{i,\ell}^\top = c_i^\top-\varepsilon_i x_\ell^\top .
\label{eq:d_merged}
\end{align}
Consider the linear program
\begin{align}
\max_{x}\ d_{i,\ell}^\top x \quad \text{s.t. } Fx\le g .
\end{align}
Its dual is
\begin{align}
\min_{\pi_{i,\ell}\ge 0}\ \pi_{i,\ell}^\top g
\quad \text{s.t.}\quad
\pi_{i,\ell}^\top F=d_{i,\ell}^\top .
\label{eq:pi_dual}
\end{align}
Therefore, if there exists $\pi_{i,\ell}\ge 0$ such that
\begin{align}
\pi_{i,\ell}^\top F = c_i^\top-\varepsilon_i x_\ell^\top ,
\label{eq:pi_match}
\end{align}
then $d_{i,\ell}^\top x\le \pi_{i,\ell}^\top g$ for all
$x\in\cal S(F,g)$. Its compact form is imposed via \eqref{eq:pi_matching}.
Substituting into \eqref{eq:pre_lp_bound} yields
\begin{align}
c_i^\top x+h_i(x)
\le
t_i+\pi_{i,\ell}^\top g+\frac{1}{2}{\varepsilon_i}\|x_\ell\|_2^2,
\, \forall x\in\cal S(F,g).
\label{eq:merged_bound}
\end{align}

Therefore, to ensure that \eqref{eq:facet_goal} is satisfied, 
it suffices to enforce, for all $i\in\mathcal I_s$ and all
$\ell\in\mathcal I_\ell$,
\begin{align}
t_i+\pi_{i,\ell}^\top g
+
\frac{\varepsilon_i}{2}\|x_\ell\|_2^2
\le \lambda g_i .
\label{eq:facet_final_merged}
\end{align}
Stacking these inequalities over $i$ gives
\begin{align}
P_{\ell} g
+
t
+
\frac12\,\varepsilon\odot\|x_\ell\|_2^2
\le \lambda g,
\qquad \forall \ell\in\mathcal I_\ell,
\label{eq:margin_merged}
\end{align}
where $P_\ell$ stacks $\pi_{i,\ell}^\top$ row-wise, which is \eqref{eq:margin_merged_opt}. On the other hand,  for any $x\in\cal S(F,g)$, \eqref{eq:merged_bound} and
\eqref{eq:facet_final_merged} imply $H_i(x)\le \lambda g_i$ for all
$i\in\mathcal I_s$, which establishes \eqref{eq:robust_goal_vec}.
This completes the proof. \hfill$\blacksquare$

\noindent \textit{Proof of Corollary 1:}
Clearly, $V(0)=0$, and since $0\in\cal S(F,g)$, it follows that $F_{i,:} 0 \le g_i$ or $g_i \ge 0$ for all
$i=1,\ldots,s$. Let $x(t)\in\cal S(F,g)\setminus\{0\}$ and define
\begin{align}
& \bar\alpha^\star=\max\{\bar\alpha\ge0:\bar\alpha x(t)\in\cal S(F,g)\},\nonumber \\ &
\tilde x=\bar\alpha^\star x(t)\in\partial\cal S(F,g).   
\end{align}
Then, $\bar\alpha^\star\ge1$ and $x(t)=\varepsilon\tilde x$ with
$\varepsilon=\frac{1}{\bar\alpha^\star}\in(0,1]$.
Since $\tilde x\in\cal S(F,g)$, one has
$\frac{F_{i,:}x(t)}{g_i}
=\varepsilon\,\frac{F_{i,:}\tilde x}{g_i}\le\varepsilon,\, \forall i$, and because at least one facet is active at $\tilde x$, there exists $j$ such that $\frac{F_{j,:}x(t)}{g_j}=\varepsilon.$
Hence,
\begin{align} \label{Vi}
    V(x(t))=\max_{i\in \mathcal I_s}\frac{F_{i,:} \, x(t)}{g_i}=\varepsilon>0,
\end{align}
so $V$ is positive definite on $\cal S(F,g)$. From \eqref{eq:V_def_cor}, $V(x)\le\varepsilon$ implies $Fx\le\varepsilon g$, and in
particular $V(x)\le1$ for all $x\in\cal S(F,g)$. Moreover, for any
$\tilde x\in\partial\cal S(F,g)$, $V(\tilde x)=1.$ Define
\begin{equation}\label{eq:H_def_cor}
H(x(t))=F x(t+1),
\end{equation}
with $H(x(t))=[H_1(x(t)),...,H_s(x(t))]$ and $H_i(x)$ being defined in \eqref{Hf}. Then,
\begin{equation}\label{eq:V_next_in_H}
V(x(t+1))=\max_{i\in\mathcal I_s}\frac{H_i(x(t))}{g_i}.
\end{equation}
By Theorem~2, $H(\cdot)$ is DC on $\cal S(F,g)$ and satisfies
the boundary contractivity condition
$H(\tilde x)\le\lambda g,\, \forall \tilde x\in\partial\cal S(F,g),$
which yields
$
V(x(t+1))\le\lambda,\, \forall x(t)\in\partial\cal S(F,g).
$
Since $V(x(t))=1$ on the boundary,
$
V(x(t+1))\le\lambda V(x(t)),\, \forall x(t)\in\partial\cal S(F,g).
$ For $x(t)\in\mathrm{int}\cal S(F,g)$, write $x(t)=\varepsilon\tilde x$ with
$\tilde x\in\partial\cal S(F,g)$ and $\varepsilon\in(0,1)$. Now, take $\overline H(\cdot)=V(\cdot)$, where $V$ is the Minkowski functional of
$\cal S(F,g)$ defined in \eqref{eq:V_def_cor}.
Then,  $\overline H$ is convex, satisfies $\overline H(0)=0$, and $\overline H(x)=1$ for all
$x\in\partial\cal S(F,g)$.
By convexity,
$
H(\varepsilon\tilde x)\le\overline H(\varepsilon\tilde x)
\le\varepsilon\,\overline H(\tilde x)
=\varepsilon H(\tilde x),
$
and therefore
\begin{align}\label{eq:V_interior_decay}
& V(x(t+1))
=
\max_{i\in \mathcal I_s} \frac{H_i(x(t))}{g_i}
\le \varepsilon \lambda
= \lambda V(x(t)),
\nonumber \\ & \forall x(t)\in\mathrm{int}\,\cal S(F,g),
\end{align}
where the last inequality is obtained by  \eqref{Vi}.
\hfill $\blacksquare$\vspace{6pt}

\noindent \textit{Proof of Proposition 1:}
Note that the second-order condition is imposed only for facets
$i \in \mathcal I_{+} \cup \mathcal I_{0}$. For sign-symmetric paired facets
with indices $i \notin \mathcal I_{+} \cup \mathcal I_{0}$, no additional
constraint is required. This is because, for any sign--symmetric pair $(i,j)$ with $F_{j,:}=-F_{i,:}$, the nonlinear terms satisfy $z_j Q(x)=-z_i Q(x)$ and hence $\max_{x\in\cal S} z_j Q(x)=-\min_{x\in\cal S} z_i Q(x)$. Since $z_i Q(x)$ is convex and serves as the representative for the pair, both extrema are attained at vertices, and no separate condition is needed for facets with indices $i \notin \mathcal I_{+} \cup \mathcal I_{0}$.

Fix any facet index $i\in\mathcal I_s$. From \eqref{eq:hyb_split} and \eqref{Hf},
for all $x\in\cal S(F,g)$, setting $h_i=z_i+r_i$, which is imposed by \eqref{eq:hyb_split}, one has
\begin{alignat}{2}
\label{eq:hyb_H_split_pf}
H_i(x)
=\underbrace{c_i^\top x+z_i^\top Q(x)}_{=H_i^{\mathrm{cvx}}(x)}
+\underbrace{r_i^\top Q(x)}_{=H_i^{\mathrm{res}}(x)}.
\end{alignat}

By \eqref{eq:hyb_convex}, for each fixed $i$ the map
\(
H_i^{\mathrm{cvx}}(x)=c_i^\top x+z_i^\top Q(x)
\)
is convex on $\cal S(F,g)$. Since $\cal S(F,g)$ is a polytope, there exists a vertex
$x_{\ell^\star}$ such that
\begin{align} \label{Hpro1}
  \max_{x\in\cal S(F,g)} H_i^{\mathrm{cvx}}(x)
=\max_{\ell\in\mathcal I_\ell} H_i^{\mathrm{cvx}}(x_{\ell})
=
c_i^\top x_{\ell^\star}+z_i^\top q_{\ell^\star}.  
\end{align}

Moreover, for each $i\in\mathcal I_s$
there exists $\alpha_i\ge 0$ such that $c_i^\top=\alpha_i^\top F$. Hence, for any
vertex $x_\ell$ one has 
\begin{align}
  c_i^\top x_\ell
=\alpha_i^\top F x_\ell
=\alpha_i^\top g-\alpha_i^\top(g-Fx_\ell),
\, g-Fx_\ell\succeq 0.  
\end{align}
Let $x_{\ell^\star}\in\arg\max_{\ell\in\mathcal I_\ell}\big(c_i^\top x_\ell+z_i^\top q_\ell\big)$.
Substituting the above identity at $\ell^\star$ in \eqref{Hpro1} yields
\begin{align}\label{conpart}
\max_{x\in\cal S(F,g)} H_i^{\mathrm{cvx}}(x)
&=
\alpha_i^\top g
+
\Big(z_i^\top q_{\ell^\star}-\alpha_i^\top(g-Fx_{\ell^\star})\Big)
\nonumber\\
&=
\alpha_i^\top g
+
\max_{\ell\in\mathcal I_\ell}
\Big(z_i^\top q_\ell-\alpha_i^\top(g-Fx_\ell)\Big).
\end{align}
In particular, the convex component is maximized {exactly} at a vertex and no
curvature slack in introduced. Moreover,  splitting conservativeness is avoided, as both linear and nonlinear terms are evaluated at the same maximizing vertex.


We now treat the remaining non-convex nonlinearities. For $x_0=0$, we have $q_0=Q(0)=0$. Therefore, by Assumption 6, one has
\begin{alignat}{2}
r_i^\top Q(x)
\le \|L_Q\odot r_i\|_1\,R_0.
\end{alignat}

By \eqref{eq:hyb_abs_r}, $s_i\ge |L_Q\odot r_i|$ componentwise, so
$\|L_Q\odot r_i\|_1\le \mathbf 1^\top s_i$, and thus
\begin{alignat}{2}
\label{eq:hyb_res_lip_s}
H_i^{\mathrm{res}}(x)\le R_0 \,\mathbf 1^\top s_i,
\qquad \forall x\in\cal S(F,g),\ \forall i\in\mathcal I_s.
\end{alignat}

Combining \eqref{eq:hyb_H_split_pf}, \eqref{conpart}, and
\eqref{eq:hyb_res_lip_s}, for any fixed vertex $x_\ell$ and all $x\in\cal S(F,g)$,
\(
H_i(x)\le \alpha_i^\top g
+ \big(z_i^\top q_\ell-\alpha_i^\top(g-Fx_\ell)\big)
+ R_0\,\mathbf 1^\top s_i .
\)
Then, stacking $\alpha_i^\top$ row-wise into $P\in\mathbb R^{s\times s}$, i.e.,
$P_{i,:}=\alpha_i^\top$, the constraints \eqref{eq:hyb_epi} and \eqref{eq:hyb_margin}
together enforce
\begin{align}
& H_i(x) \;\le\;(Pg)_i+t_i\;\le  \;\lambda g_i, \, \forall x\in\cal S(F,g), \, \forall i\in\mathcal I_s,    
\end{align}
This yields
$Fx\le g\Rightarrow F x(t{+}1)\le t\le \lambda g$, which is
$\lambda$-contractivity. 

We now show that this hybrid certificate is no more conservative than Theorem~1 and
can strictly outperform it. Consider the Lipschitz-only (Theorem~1-type) verification that bounds the entire
nonlinear term $h_i^\top Q(x)$ using \eqref{eq:comp_lip_assm}. For any vertex $x_\ell$
and all $x\in\cal S(F,g)$, one has 
$h_i^\top Q(x)
\le  R_0 \,\|L_Q\odot h_i\|_1.$
In the proposed hybrid construction, \eqref{eq:hyb_abs_b} implies
$a_i\ge |L_Q\odot h_i|$ componentwise, hence $\mathbf 1^\top a_i\ge \|L_Q\odot h_i\|_1$.
Constraint \eqref{eq:hyb_dominate} enforces $\mathbf 1^\top k_i\le \mathbf 1^\top a_i$,
and therefore $R_0\,\mathbf 1^\top k_i \le R_0\,\mathbf 1^\top a_i$ and
$R_0\,\mathbf 1^\top k_i$ is an admissible upper bound on $R_0\,\|L_Q\odot h_i\|_1 $. Moreover, the portion assigned to the convex term $z_i^\top Q(x)$ is handled
{exactly} by vertex maximization. Consequently, relative to a pure Lipschitz treatment as in Theorem~1, moving any contribution of $h_i^\top Q(x)$ from the residual $r_i^\top Q(x)$ into
the convex part $z_i^\top Q(x)$ can only reduce the overall upper bound. Hence, whenever the optimization can convexify a nonzero portion of the nonlinear term (i.e., $z_i\neq 0$), the
resulting certificate strictly dominates the Lipschitz-based bound of Theorem~1. This completes the proof.
\hfill$\blacksquare$ \vspace{6pt}

\noindent\emph{Proof of Theorem~\ref{thm:th7_face_restricted}.}
 Fix any face $\mathcal F$ of $\cal S(F,g)$ and any facet index $i\in\mathcal I_f(\mathcal F)$. Fix any controller index $\ell\in\mathcal I_\ell$ with feasible variables $(G_{K_\ell,1},G_{K_\ell,2},P_\ell,t_\ell,\varepsilon_\ell)$.
Define the convexified facet function
\begin{align}
\widetilde H_i^{(\ell)}(x)
= H_i^{(\ell)}(x) + \tfrac12\,(\varepsilon_\ell)_i \|x\|_2^2.
\label{eq:th7_Htilde}
\end{align}
For any $x\in\mathcal F$, using \eqref{eq:th7_soc_face} we have
\begin{align}
T_{\mathcal F}^\top \frac{\partial^2 \widetilde H_i^{(\ell)}(x)}{\partial x^2} T_{\mathcal F}
&=
T_{\mathcal F}^\top \frac{\partial^2 H_i^{(\ell)}(x)}{\partial x^2} T_{\mathcal F}
+(\varepsilon_\ell)_i\, T_{\mathcal F}^\top T_{\mathcal F}
\nonumber\\
&\succeq
-(\varepsilon_\ell)_i I_{d_{\mathcal F}} + (\varepsilon_\ell)_i I_{d_{\mathcal F}}
=0.
\label{eq:th7_face_cvx}
\end{align}
Hence, $\widetilde H_i^{(\ell)}$ is convex {when restricted to the face} $\mathcal F$ (i.e., along all tangent directions of $\mathcal F$).
Therefore, 
\begin{align}
\max_{z\in\mathcal F}\widetilde H_i^{(\ell)}(z)
=
\max_{{\ell}\in\mathcal I_f}\widetilde H_i^{(\ell)}(x_{\ell}).
\label{eq:th7_face_vertex_max}
\end{align}

From \eqref{eq:th7_epi_face}, one has
\begin{align}
F_{i,:}X_1G_{K_\ell,2} q_{\ell} + \tfrac12\,(\varepsilon_\ell)_i \|x_{\ell}\|_2^2 \le (t_\ell)_i.
\label{eq:th7_epi_i}
\end{align}
Also, from \eqref{eq:th7_match}, elementwise $P_\ell\ge 0$, and $Fv\le g$,
\begin{align}
F_{i,:}X_1G_{K_\ell,1} x_{\ell}
=
(P_\ell F)_{i,:} x_{\ell}
\le (P_\ell g)_i.
\label{eq:th7_lin_i}
\end{align}
Adding \eqref{eq:th7_epi_i} and \eqref{eq:th7_lin_i}, and using \eqref{eq:th7_H}--\eqref{eq:th7_Htilde}, yields
\begin{align}
\widetilde H_i^{(\ell)}(x_{\ell})\le (P_\ell g)_i+(t_\ell)_i.
\end{align}
Then \eqref{eq:th7_margin} implies $(P_\ell g)_i+(t_\ell)_i\le \lambda g_i$, so
\begin{align}
\widetilde H_i^{(\ell)}(x_\ell)\le \lambda g_i,\qquad \forall {\ell} \in \mathcal I_\ell.
\label{eq:th7_vertex_cert}
\end{align}
Combining \eqref{eq:th7_face_vertex_max} and \eqref{eq:th7_vertex_cert} gives
\begin{align}
\widetilde H_i^{(\ell)}(z)\le \lambda g_i,\qquad \forall z\in\mathcal F.
\label{eq:th7_face_cert_tilde}
\end{align}
Since $(\varepsilon_\ell)_i\ge 0$ implies $H_i^{(\ell)}(z)\le \widetilde H_i^{(\ell)}(z)$, we conclude
\begin{align}
H_i^{(\ell)}(z)\le \lambda g_i,\qquad \forall z\in\mathcal F,\ \forall i.
\label{eq:th7_face_cert}
\end{align}

Now fix an arbitrary state $x\in\cal S(F,g)$ and let $\mathcal F(x)$ be its minimal face.
By construction \eqref{eq:th7_theta}, $x$ is expressed as a convex combination of vertices of {the same face} $\mathcal F(x)$.
This guarantees that the set of facets that matter for membership in $\lambda\mathcal F(x)$ is exactly $\mathcal I_f(\mathcal F(x))$,
and the convexity argument above applies on that domain. Under the policy \eqref{eq:th3_policy_Pm}, the successor satisfies
\begin{align}
x(t+1)
&=
\sum_{\ell\in \mathcal I_\ell}\theta_\ell(x)
\Big(X_1G_{K_\ell,1}x + X_1G_{K_\ell,2}Q(x)\Big).
\label{eq:th7_xplus}
\end{align}
Using \eqref{eq:th7_H} and linearity in the gains, one has
\begin{align}
F_{i,:}x(t+1)
&=
\sum_{\ell\in\mathcal I_\ell}\theta_\ell(x)\, H_i^{(\ell)}(x).
\label{eq:th7_facet_combo}
\end{align}
Since $x\in\mathcal F(x)$, applying \eqref{eq:th7_face_cert} (with $\mathcal F=\mathcal F(x)$ and $z=x$) gives
$H_i^{(\ell)}(x)\le \lambda g_i$ for every $\ell$ in the sum. Therefore,
\begin{align}
F_{i,:}x(t+1)
\le
\sum_{\ell \in\mathcal I_\ell}\theta_\ell(x)\,\lambda g_i
=
\lambda g_i,\qquad \forall i.
\label{eq:th7_active_facets_ok}
\end{align}
Because $x$ was arbitrary, the above holds for every $x\in\cal S(F,g)$ on its minimal face, hence yields $\lambda$-contractivity.
\hfill$\blacksquare$

\vspace{3pt}

\noindent \textit{Proof of Theorem 4:}
Fix a facet index $i\in \mathcal I_s$. For any $x\in\cal S(F,g)$ and
$w\in\mathcal W$, the $i$th facet map of the disturbed closed-loop update is
\begin{alignat}{2}
& H_i^w(x,w)
= \nonumber \\ & F_{i,:}\Big((X_1-W_0)G_{K,1}x+(X_1-W_0)G_{K,2}Q(x)+w\Big) \nonumber\\
&= F_{i,:}X_1G_{K,1}x + F_{i,:}X_1G_{K,2}Q(x) \nonumber \\ &
   -F_{i,:}W_0\big(G_{K,1}x+G_{K,2}Q(x)\big) + F_{i,:}w \label{Hw1}.
\end{alignat}
One has 
\begin{alignat}{2}
& -F_{i,:}W_0\big(G_{K,1}x+G_{K,2}Q(x)\big)
\le \nonumber \\ & T \, h_w\|F_{i,:}\|_1\Big(\|G_{K,1}x\|_2 +   \|G_{K,2}Q(x)\|_2\Big).
\label{eq:W0_support}
\end{alignat}
Combining \eqref{Hw1}--\eqref{eq:W0_support}, for all $x\in\cal S(F,g)$, one has
\begin{alignat}{2}
& H_i^w(x,w)
\le \widehat H_i(x)
= F_{i,:}X_1G_{K,1}x +  F_{i,:}X_1G_{K,2}Q(x) \nonumber \\&
+T \, h_w\|F_{i,:}\|_1\Big(\|G_{K,1}x\|_2 + \|G_{K,2}Q(x)\|_2 + 1\Big).
\label{eq:Hhat_def}
\end{alignat}
Therefore, a sufficient condition for robust $\lambda$-contractivity is
\begin{alignat}{2}
\max_{x\in\cal S(F,g)} \widehat H_i(x) \le \lambda g_i,
\qquad i \in \mathcal I_s.
\label{eq:suff_goal}
\end{alignat}

Write
\(
\widehat H_i(x)=c_i^\top x + h_i(x) + r_i(x),
\)
where
\begin{alignat}{2}
c_i^\top x &= F_{i,:}X_1G_{K,1}x, \qquad
h_i(x) = F_{i,:}X_1G_{K,2}Q(x), \nonumber\\
r_i(x) &= T \, h_w\|F_{i,:}\|_1\Big(\|G_{K,1}x\|_2 + \|G_{K,2}Q(x)\|_2 + 1\Big).
\end{alignat}
Note that $r_i(x)$ is convex in $x$ because it is a nonnegative weighted sum of norms of affine/nonlinear maps composed with norms.
The only potentially nonconvex component is $h_i(x)$. Assumption \eqref{eq:robust_hess_in_convexity} implies $\frac{\partial^2 h_i(x)}{\partial x^2} \succeq -\varepsilon_i I,
\, \forall x\in\cal S(F,g)$.
Define the convexified facet map
\begin{alignat}{2}
\widetilde H_i(x)
&= c_i^\top x + h_i(x) + r_i(x) + \tfrac12\varepsilon_i\|x\|_2^2.
\label{eq:Htilde_def}
\end{alignat}
Since $\tfrac{\partial^2}{\partial x^2}(\tfrac12\varepsilon_i\|x\|_2^2)=\varepsilon_i I$,
\eqref{eq:robust_hess_in_convexity} implies, $\forall x\in\cal S(F,g)$, 
\begin{alignat}{2}
\frac{\partial^2 \widetilde H_i(x)}{\partial x^2}
&=
\frac{\partial^2 h_i(x)}{\partial x^2} + \varepsilon_i I
+ \frac{\partial^2 r_i(x)}{\partial x^2}
\succeq 0,
\label{eq:convexity_total}
\end{alignat}
Thus, $\widetilde H_i$ is convex on $\cal S(F,g)$. Therefore, the maximum of $\widetilde H_i$ over the polytope is attained at a vertex and is given by
\begin{alignat}{2}
& \max_{x\in\cal S(F,g)} \widetilde H_i(x)
= \nonumber \\ &
\max_{\ell \in \mathcal I_{\ell}}
\Big(c_i^\top x_\ell + h_i(x_\ell) + r_i(x_\ell) + \tfrac12\varepsilon_i\|x_\ell\|_2^2\Big).
\label{eq:max_at_vertices}
\end{alignat}

Moreover, since $-\tfrac12\varepsilon_i\|x\|_2^2\le 0$, we have
\begin{alignat}{2}
\widehat H_i(x)
&= \widetilde H_i(x) - \tfrac12\varepsilon_i\|x\|_2^2
\le \widetilde H_i(x),
\, \forall x\in\cal S(F,g),
\end{alignat}
hence
\begin{alignat}{2}
\max_{x\in\cal S(F,g)} \widehat H_i(x)
\le
\max_{x\in\cal S(F,g)} \widetilde H_i(x).
\label{eq:hat_le_tilde}
\end{alignat}

Combining \eqref{eq:max_at_vertices} and \eqref{eq:hat_le_tilde}, a sufficient condition for \eqref{eq:suff_goal} is
\begin{alignat}{2}
\max_{\ell \in \mathcal I_{\ell}}
\Big(c_i^\top x_\ell + h_i(x_\ell) + r_i(x_\ell) + \tfrac12\varepsilon_i\|x_\ell\|_2^2\Big)
\le \lambda g_i.
\label{eq:vertex_suff}
\end{alignat}

Using the same mechanism as Theorem 2, constraints \eqref{eq:robust_PF_in_convexity} and \eqref{eq:robust_margin_in_convexity} imply
\begin{alignat}{2}
c_i^\top x_\ell = F_{i,:}X_1G_{K,1}x_\ell = (PFx_\ell)_i \le (Pg)_i,
\, \forall \ell \in \mathcal I_{\ell},
\end{alignat}
because $P\ge 0$ and $Fx_\ell\le g$ for all vertices. Next, \eqref{eq:robust_epi_norms} gives
\begin{alignat}{2}
\|G_{K,1}x_\ell\|_2 \le \xi_\ell^{(1)}, \qquad
\|G_{K,2}q_\ell\|_2 \le \xi_\ell^{(2)},
\end{alignat}
hence $r_i(x_\ell)\le T , h_w\|F_{i,:}\|_1\big(\xi_\ell^{(1)}+\xi_\ell^{(2)}+1\big)$.
Therefore, the componentwise vertex epigraph constraint \eqref{eq:robust_vertex_epi_in_convexity} implies, $\forall \ell \in \mathcal I_{\ell}$,
\begin{alignat}{2}
h_i(x_\ell) + r_i(x_\ell) + \tfrac12\varepsilon_i\|x_\ell\|_2^2
\le t_i.
\end{alignat}
Adding $c_i^\top x_\ell\le (Pg)_i$ yields
\begin{alignat}{2}
c_i^\top x_\ell + h_i(x_\ell) + r_i(x_\ell) + \tfrac12\varepsilon_i\|x_\ell\|_2^2
\le (Pg)_i + t_i.
\end{alignat}
Taking $\max_\ell$ and using \eqref{eq:max_at_vertices} gives
\begin{alignat}{2}
\max_{x\in\cal S(F,g)} \widetilde H_i(x)
\le (Pg)_i + t_i \le \lambda g_i,
\end{alignat}
where the last inequality follows from \eqref{eq:robust_margin_in_convexity}.
Finally, by \eqref{eq:hat_le_tilde}, $\max_{x}\widehat H_i(x)\le \lambda g_i$.
Since $H_i^w(x,w)\le \widehat H_i(x)$ for all $w\in\mathcal W$ by \eqref{eq:Hhat_def},
we conclude $H_i^w(x,w)\le \lambda g_i$ for all $x\in\cal S(F,g)$ and all
$w\in\mathcal W$. Stacking all facets yields robust $\lambda$-contractivity.
\hfill$\blacksquare$ \vspace{6pt}

\begin{IEEEbiographynophoto}{Amir Modares}
Amir Modares (Student Member, IEEE) received a bachelor’s degree in electrical engineering from
Azad University, Qaen, Iran, in 2019, and a master’s degree in electrical engineering from the Sharif University of Technology, Tehran, Iran, in 2022.
His research interests are machine learning, convex
optimization, safe control, and nonlinear control.
\end{IEEEbiographynophoto} \vspace{-20pt}

\begin{IEEEbiographynophoto}{Bahare Kiumarsi}
received her B.S. degree in Electrical Engineering from Shahrood University of Technology, Shahrood, Iran, in 2009, her M.S. degree from Ferdowsi University of Mashhad, Iran, in 2013, and her Ph.D. degree from the University of Texas at Arlington, Arlington, TX, USA, in 2017. She is currently an Assistant Professor in the Department of Electrical and Computer Engineering at Michigan State University. Prior to joining Michigan State, she was a Postdoctoral Research Associate at the University of Illinois at Urbana-Champaign. Her research interests include learning-based control, the security of cyber-physical systems, and distributed control of multi-agent systems. She serves as an Associate Editor for Neurocomputing.
\end{IEEEbiographynophoto}\vspace{-25pt}

\begin{IEEEbiographynophoto}{Hamidreza Modares}
received a B.S. degree from the University of Tehran, Tehran, Iran, in 2004, an M.S. degree from the Shahrood University of Technology, Shahrood, Iran, in 2006, and a Ph.D. degree from the University of Texas at Arlington, Arlington, TX, USA, in 2015, all in Electrical Engineering. He is currently an Associate Professor in the Department of Mechanical Engineering at Michigan State University. Before joining Michigan State University, he was an Assistant professor in the Department of Electrical Engineering at Missouri University of Science and Technology. His current research interests include reinforcement learning, safe control, machine learning in control, distributed control of multi-agent systems, and robotics. He is an Associate Editor of IEEE Transactions on Systems, Man, and Cybernetics: systems. 
\end{IEEEbiographynophoto}

\bibliographystyle{IEEEtran}

\end{document}